\newcommand{\RNum}[1]{\uppercase\expandafter{\romannumeral #1\relax}}
\DeclareMathOperator*{\argmax}{argmax}
\newtheorem{theorem}{Theorem}%[section]
\newtheorem{definition}[theorem]{Definition}
\newacronym{RV}{r.v.}{random variable}
\newtheorem{lemma}{Lemma}
\newtheorem{cor}{Corollary}
\begin{document}
%
% paper title
% Titles are generally capitalized except for words such as a, an, and, as,
% at, but, by, for, in, nor, of, on, or, the, to and up, which are usually
% not capitalized unless they are the first or last word of the title.
% Linebreaks \\ can be used within to get better formatting as desired.
% Do not put math or special symbols in the title.
\title{Privacy-Preserving Edge Caching: \\A Probabilistic Approach}
%
%
% author names and IEEE memberships
% note positions of commas and nonbreaking spaces ( ~ ) LaTeX will not break
% a structure at a ~ so this keeps an author's name from being broken across
% two lines.
% use \thanks{} to gain access to the first footnote area
% a separate \thanks must be used for each paragraph as LaTeX2e's \thanks
% was not built to handle multiple paragraphs
%
%
%\IEEEcompsocitemizethanks is a special \thanks that produces the bulleted
% lists the Computer Society journals use for "first footnote" author
% affiliations. Use \IEEEcompsocthanksitem which works much like \item
% for each affiliation group. When not in compsoc mode,
% \IEEEcompsocitemizethanks becomes like \thanks and
% \IEEEcompsocthanksitem becomes a line break with idention. This
% facilitates dual compilation, although admittedly the differences in the
% desired content of \author between the different types of papers makes a
% one-size-fits-all approach a daunting prospect. For instance, compsoc 
% journal papers have the author affiliations above the "Manuscript
% received ..."  text while in non-compsoc journals this is reversed. Sigh.

\author{Seyedeh~Bahereh~Hassanpour,%~\IEEEmembership{Member,~IEEE,}
        Ahmad~Khonsari,%~\IEEEmembership{Fellow,~OSA,}
        ~Masoumeh~Moradian,
        and~Seyed~Pooya~Shariatpanahi%~\IEEEmembership{Life~Fellow,~IEEE}% <-this % stops a space
        %~Doe,~\IEEEmembership{Life~Fellow,~IEEE}% <-this % stops a space
\IEEEcompsocitemizethanks{\IEEEcompsocthanksitem S. B. Hassanpour, A. Khonsari, and S. P. Shariatpanahi are with the Department
of Electrical and Computer Engineering, Tehran University, Tehran, Iran.\protect\\
% note need leading \protect in front of \\ to get a newline within \thanks as
% \\ is fragile and will error, could use \hfil\break instead.
E-mails: {b.hassanpour, a\_khonsari, p.shariatpanahi}@ut.ac.ir.
\IEEEcompsocthanksitem M. Moradian and A. Khonsari are with School of Computer, Institute for Research in Fundamental Sience (IPM), Tehran, Iran.\protect\\
E-mails:{mmoradian, ak}@ipm.ir
}% <-this % stops an unwanted space
%\thanks{Manuscript received April 19, 2005; revised August 26, 2015.}
}

% note the % following the last \IEEEmembership and also \thanks - 
% these prevent an unwanted space from occurring between the last author name
% and the end of the author line. i.e., if you had this:
% 
% \author{....lastname \thanks{...} \thanks{...} }
%                     ^------------^------------^----Do not want these spaces!
%
% a space would be appended to the last name and could cause every name on that
% line to be shifted left slightly. This is one of those "LaTeX things". For
% instance, "\textbf{A} \textbf{B}" will typeset as "A B" not "AB". To get
% "AB" then you have to do: "\textbf{A}\textbf{B}"
% \thanks is no different in this regard, so shield the last } of each \thanks
% that ends a line with a % and do not let a space in before the next \thanks.
% Spaces after \IEEEmembership other than the last one are OK (and needed) as
% you are supposed to have spaces between the names. For what it is worth,
% this is a minor point as most people would not even notice if the said evil
% space somehow managed to creep in.

% The paper headers
\markboth{Journal of \LaTeX\ Class Files,~Vol.~14, No.~8, August~2015}%
{Shell \MakeLowercase{\textit{et al.}}: Bare Demo of IEEEtran.cls for Computer Society Journals}
% The only time the second header will appear is for the odd numbered pages
% after the title page when using the twoside option.
% 
% *** Note that you probably will NOT want to include the author's ***
% *** name in the headers of peer review papers.                   ***
% You can use \ifCLASSOPTIONpeerreview for conditional compilation here if
% you desire.

% The publisher's ID mark at the bottom of the page is less important with
% Computer Society journal papers as those publications place the marks
% outside of the main text columns and, therefore, unlike regular IEEE
% journals, the available text space is not reduced by their presence.
% If you want to put a publisher's ID mark on the page you can do it like
% this:
%\IEEEpubid{0000--0000/00\$00.00~\copyright~2015 IEEE}
% or like this to get the Computer Society new two part style.
%\IEEEpubid{\makebox[\columnwidth]{\hfill 0000--0000/00/\$00.00~\copyright~2015 IEEE}%
%\hspace{\columnsep}\makebox[\columnwidth]{Published by the IEEE Computer Society\hfill}}
% Remember, if you use this you must call \IEEEpubidadjcol in the second
% column for its text to clear the IEEEpubid mark (Computer Society jorunal
% papers don't need this extra clearance.)

% use for special paper notices
%\IEEEspecialpapernotice{(Invited Paper)}

% for Computer Society papers, we must declare the abstract and index terms
% PRIOR to the title within the \IEEEtitleabstractindextext IEEEtran
% command as these need to go into the title area created by \maketitle.
% As a general rule, do not put math, special symbols or citations
% in the abstract or keywords.
\IEEEtitleabstractindextext{%
\begin{abstract}
Edge caching (EC) decreases the average access delay of end-users through caching popular content at the edge of the network, however, it increases the leakage probability of valuable information such as users' preferences. Most of the existing privacy-preserving approaches focus on adding extra layers of encryption, which confronts the network with more challenges such as energy and computation limitations. We employ a chunk-based joint probabilistic caching (JPC) approach to mislead an adversary eavesdropping on the communication inside an EC and maximizing the adversary's error in estimating the requested file and the requesting cache. In JPC, we optimize the probability of each cache placement to minimize the communication cost while guaranteeing the desired privacy and then, formulate the optimization problem as a linear programming (LP) problem. Since JPC inherits the curse of dimensionality, we also propose scalable JPC (SPC), which reduces the number of feasible cache placements by dividing files into non-overlapping subsets. We also compare the JPC and SPC approach against an existing probabilistic method, referred to as disjoint probabilistic caching (DPC) and random dummy-based approach (RDA). Results obtained through extensive numerical evaluations confirm the validity of the analytical approach, the superiority of JPC and SPC over DPC and RDA. 
\end{abstract}

% Note that keywords are not normally used for peerreview papers.
\begin{IEEEkeywords}
Edge cache network, privacy, chunk-based probabilistic caching, communication cost, linear optimization.
\end{IEEEkeywords}}

% make the title area
\maketitle

% To allow for easy dual compilation without having to reenter the
% abstract/keywords data, the \IEEEtitleabstractindextext text will
% not be used in maketitle, but will appear (i.e., to be "transported")
% here as \IEEEdisplaynontitleabstractindextext when the compsoc 
% or transmag modes are not selected <OR> if conference mode is selected 
% - because all conference papers position the abstract like regular
% papers do.
\IEEEdisplaynontitleabstractindextext
% \IEEEdisplaynontitleabstractindextext has no effect when using
% compsoc or transmag under a non-conference mode.

% For peer review papers, you can put extra information on the cover
% page as needed:
% \ifCLASSOPTIONpeerreview
% \begin{center} \bfseries EDICS Category: 3-BBND \end{center}
% \fi
%
% For peerreview papers, this IEEEtran command inserts a page break and
% creates the second title. It will be ignored for other modes.
\IEEEpeerreviewmaketitle

\IEEEraisesectionheading{\section{Introduction}\label{sec:introduction}}

\IEEEPARstart{T}{he} deluge of interest in using delay-critical Internet-of-things (IoT) applications and delay-sensitive data services, even beyond fifth-generation (B5G) communications, poses a significant challenge to the data processing capabilities of the network. In this regard, the edge network emerged as a pivotal key in 5G for providing computation, storage, and processing much closer to the end-users compared to the cloud networks. Edge computing, as a paradigm in edge networks, provides service delivery at the edge of the network and mitigates transmitting the computation to more distant servers inside the cloud through equipping the intermediate servers with edge nodes such as micro base stations or WiFi access points \cite{ xiong2020resource,liu2022distributed}. Furthermore, edge caching (EC), as a promising technique to store popular content closer to the end-users, decreases the traffic of the backhaul links and improves the quality of experience (QoE) by the end-users in terms of access delay \cite{yousefpour2019all,khan2019edge}.

EC paradigm enhances security and privacy by bringing the content closer to the end-users and eliminating the access of multiple intermediate nodes to the data. However, this geographical proximity also brings the potential (active/passive) attackers closer to the critical/personal information such as users’ location or personal preferences. Therefore, the network is more vulnerable to different types of attacks \cite{ranaweera2021survey,ni2020security}. For example, an attacker may interrupt the communication between the user and the caching edge device through a jamming attack, break down an EC server through a distributed denial of service (DoS) attack, or get access to the caching contents and network resources as a fake legal user through spoofing attacks \cite{xiao2018security,cui2018multi}.
Moreover, the EC network consists of distributed edge devices controlled by autonomous people or companies. These owners may be curious about the data contents stored on their caches and even launch insider attacks or eavesdrop to obtain critical private information of the customers and sell them for different purposes, e.g., to the advertisements companies \cite{ xiao2018security, he2019physical}. Generally, the main motivation of privacy attacks in EC networks is to derive the identity of the requesting users, their queries, and the statistics of the queries, e.g., the popularity of the contents.

Much of the focus of researchers in recent years has been on studying the location \cite{ko2020lpga, jiang2021location, niu2014privacy} and the pattern privacies \cite{coopamootoo2020usage }, which aim to secure the location and the usage pattern of the users, respectively. The proposed solutions to tackle the privacy issues exploit cryptography and anonymity \cite{andreoletti2019privacy,andreoletti2019privacy1,acs2017privacy,cui2020edge}, information-theory \cite{schlegel2022privacy,hassanpour2020context}, machine-learning \cite{xiao2018security,yu2020mobility}, and dummy transmissions\cite{niu2014privacy}. In \cite{andreoletti2019privacy}, the authors propose a pseudonyms-based approach to conceal the real identity of the contents belonging to the content providers (CPs) and users' requests from the ISP, which is the cache owner, in a content distribution network (CDN). Due to the fact that over time ISP can discover the relation between the fake and real ID of the contents, there is a need to refresh the encrypted name. For this purpose, the authors in \cite{andreoletti2019privacy1} derived the optimal number of encryption refreshes in a static time (e.g., a day). In \cite{acs2017privacy}, the authors preserve the privacy of CPs and users from ISP by using Shamir secret sharing (SSS), which shares the content popularity among caches without letting the ISP know. Despite the abundance of cryptographic solutions, these solutions mostly suffer from computation complexity, power consumption, and decryption delays. However, the solutions proposed for providing security and privacy at the edge should possess low complexity due to the edge nodes' limited computational power and memory of the edge nodes and the energy and hardware limitations. The authors in \cite{schlegel2022privacy} propose a coding scheme that includes SSS and replicated subtasks to provide information-theoretic data privacy in the presence of untrustworthy edge servers. In \cite{hassanpour2020context} the authors applied information theory to maximize the lower bound for the best adversary's estimation error using Fano inequality. They mathematically formulate an $\epsilon$-constraint optimization model to find the probability of catching each file to maximize the adversary's error. In the above studies, the popularity of content is known. In the case of unknown or time-varying popularities, machine learning techniques are mostly employed to learn popularities. However, they require sharing the requests' information with a central node for the aim of training, which leads to privacy leakage. Federated learning is proposed to overcome the privacy issues in online learning scenarios \cite{yu2020mobility,yu2021privacy}. Another general policy in preserving privacy in the presence of eavesdroppers is the dummy-transmission-based approach, which relies on transmitting dummy queries by the caches or dummy information by the server\cite{niu2014privacy} in the content delivery phase in order to obfuscate the eavesdropper. The dummy-based approaches suffer from backhaul traffic increases due to extra unnecessary transmissions.

Probabilistic caching can also be employed in an EC network to leverage the privacy degree of the network since it increases the ambiguity of the adversaries over the caches' contents and thus, increases the privacy degree of the network. Furthermore, it takes advantage of lower complexity compared to cryptographic methods. Probabilistic cache placement is employed in \cite{shi2018probabilistic} to provide physical-layer security in a wireless cache-aided network in the presence of eavesdroppers. They optimize the probability of caching individual files to maximize the number of transmissions not decoded by eavesdroppers while minimizing the communication cost.

In this paper, we study probabilistic caching to preserve the desired privacy in an EC network while minimizing the network communication cost. In particular, our proposed EC network consists of a single server and $K$ distributed edge caches, where the server is in charge of content placement in the caches and delivering uncached contents to them. Furthermore, a passive eavesdropper monitors the amount of traffic transferred over the shared link between the server and the caches. In the proposed scenario, we minimize the communication cost while satisfying a minimum privacy degree, where the communication cost and privacy degree are defined as the average amount of traffic over the shared link and the error probability of the adversary, respectively. We assume that the adversary is aware of the content popularities and probabilistic caching protocols \cite{niu2015enhancing}. However, it has no access to the information of cache queries and their corresponding responses. As such, the adversary can only measure the amount of transferred traffic over the shared link in the content delivery (CD) phase and thus, exploits this information to estimate the identity of the requests. We optimize chunk-based probabilistic caching to increase the uncertainty of the adversary. Unlike the previous studies \cite{blaszczyszyn2015optimal,lin2019probabilistic}, which optimize the probability of caching individual files in order to satisfy the desired performance metrics, we optimize the probabilities of joint placements of the files in the caches and highlight its advantages throughout the paper. The main contributions of the paper are as follows.
\begin{itemize}
	\item We define joint probabilistic caching (JPC) policy rigorously and formulate the optimization of communication cost-constrained to guaranteed privacy under the JPC approach. Then, we show that the proposed optimization can be turned into a Linear Programming (LP) problem. We also propose the hit-ratio-based optimization of JPC and assert that chunk-based optimization provides more flexibility for achieving higher cache hit ratios.
	\item We solve the same optimization problem as in JPC considering the disjoint probabilistic caching (DPC) policy, in which the probabilities of caching individual files are optimized instead of the probabilities of different cache placements. Also, we show that optimal DPC has the same performance as optimized chunk-based JPC. However, JPC outperforms DPC when hit-ratio constraints and short-term performances are required.   
	\item We propose scalable JPC, in which the complexity decreases compared to JPC through caching chunks which are chosen from $L$ non-overlapping subsets of files instead of $N$ files. The performance of scalable JPC can be arbitrarily close to the optimal performance in the JPC through increasing the number of subsets $L$. 
	\item Finally, we present extensive numerical and simulation results to validate our analytical approach. We also propose a Random dummy approach as a benchmark and show that JPC and scalable JPC outperform the dummy approach significantly. Also, we show that by choosing proper subsets in scalable JPC, we can shrink the feasible set in the corresponding LP optimization significantly while keeping the performance very close to the optimal performance of JPC. 
\end{itemize}

The remainder of the paper is organized as follows. In Section \ref{sec:systemModel}, we describe the EC network, the adversary model, and the performance metrics. Section \ref{sec:problemformulation}, provides the problem formulation and LP optimization of JPC. Section \ref{sec:DPC} is dedicated to DPC optimization and its performance against JPC. Section \ref{sec:SPC} presents scalable JPC method. Numerical results are presented in Section \ref{sec:evaluation_simulation}. Finally, Section \ref{sec:conclusion} concludes the paper.

\section{System Model and Assumptions }
\label{sec:systemModel}

\begin{figure}
	\centering %\textwidth
	\includegraphics[width=0.95\linewidth]{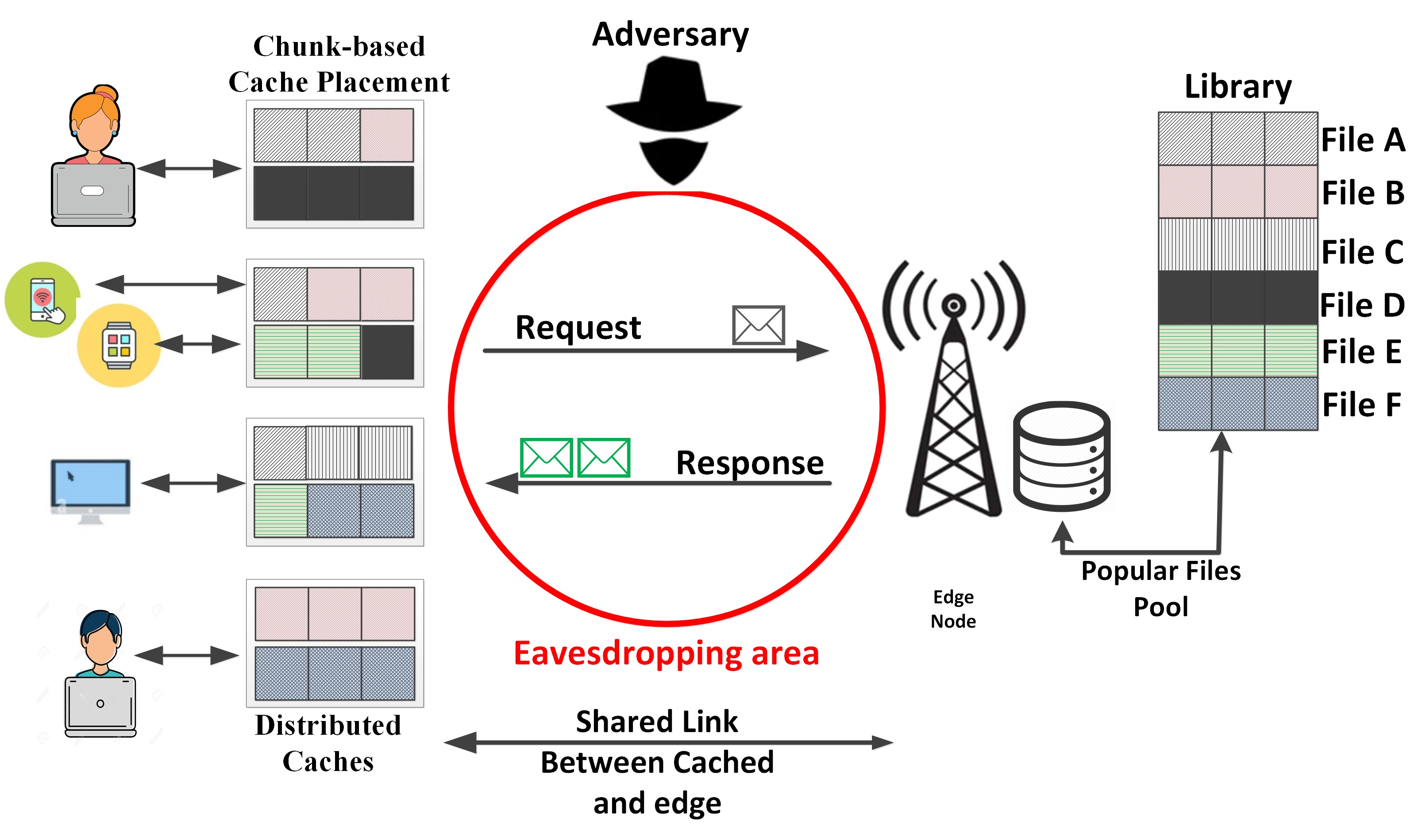}
	\caption{The adversarial system model with an edge-node which is responsible to serve the caches through a shared link.}
	\label{fig:SystemModel}
\end{figure}
In this section, we introduce our system model, an EC network comprised of an edge server node and distributed caches and an adversary eavesdropping on transmissions within the caching network. Then, we introduce the related performance metrics, including communication cost and privacy degree, and give an example to clarify the trade-off between these two metrics. Table \ref{tab:symbols} lists the notations used in this paper.

\subsection{Edge caching model}
\label{Subsec:EdgecahingModel}
As depicted in Fig.~\ref{fig:SystemModel}, our proposed caching network consists of a single edge server node, referred to as the server hereafter, and a set of distributed edge caches, denoted by $\mathcal{K}=\{1,2,\dots, K\}$, that are connected to the server through a shared link. The server has full access to a library consisting of $N$ files of equal size, i.e., $\mathcal{N}=\{1,2,...,N\}$, each of which is partitioned into $C$ chunks. Moreover, each cache $k\in\mathcal{K}$ has the capacity of $M$ files, equivalent to $MC$ chunks. 
Furthermore, each cache serves a distinct set of end-users. If the number of end-users at each cache is equal to one, the model corresponds to the case where the distributed caches are located at end-users. Otherwise, the edge caches model the distributed caches owned by an ISP. Nevertheless, both cases can be applied in our scenario since as will be discussed, we focus on the privacy of transmissions between the caches and the server, which is independent of the number of users under the coverage of caches.

In compliance with the convention, the caching process in our considered scenario is performed in two phases; cache content placement (CCP) and content delivery (CD). In the CCP phase, the server fills up the cache's storage with chunks chosen from different files. In our analysis, all chunks of a file are equally important, and thus, we only focus on the number of chunks cached from a file, regardless of which exact chunks have been cached. Moreover, the total number of chunks in each cache does not exceed the capacity of the cache, i.e., $MC$. Consequently, let $\mathbf{z}=(z_1,z_2,...,z_N)$ denote a feasible chunk placement at a typical edge cache, where $z_i$ denotes the number of chunks stored from file $i$. Then, regarding that all caches are identical, the set of feasible chunk placements at each edge cache, represented by $\mathcal{F}$, is written as
\begin{align}
\label{eq:def1feas}
\mathcal{F}=&\{\mathbf{z}|0\le z_i \le C, i\in \{1,\dots,N\}, \sum_{i=1}^{N}z_i=MC\}.
\end{align} 
 
We assume that a central entity decides about the contents of the caches in the CCP phase. Without loss of generality, we consider the server as a decision-maker in our scenario, e.g., the server plays the role of a content provider in the real world. In particular, the server chooses the placement of each cache according to a probabilistic caching policy, where the policy in fact, indicates a probability distribution over all feasible cache placements, i.e., $\mathcal{F}$. Thus, it is evident that in the proposed probabilistic caching, the chunk places in each cache are filled jointly rather than independently. In this regard, we define the joint probabilistic caching policy more rigorously as follows.

\begin{definition}{(Joint Probabilistic Caching Policy)\\}
	\label{def1}
	The joint probabilistic caching (JPC) policy at cache $k$ is defined as a probability distribution over $\mathcal{F}$, denoted by $P^{(k)}(\mathbf{z})$. Let $\mathbf{Z}^{(k)}=(Z^{(k)}_1,Z^{(k)}_2,...,Z^{(k)}_N)$ be a random vector indicating the chunk placement at cache $k$, where $Z^{(k)}_i$ is the random variable denoting the number of chunks cached from file $i$ at cache $k$. Then, $P^{(k)}(\mathbf{z})=\Pr\{\mathbf{Z}^{(k)}=\mathbf{z}\}$.
	
\end{definition}

We assume that one request is generated at each time slot in the CD phase. Also, the generated request belongs to cache $k$ with probability $p_g^{(k)}$, where $\sum_{k=1}^K p_g^{(k)}=1$. It is worth noting that the request probabilities $p_g^{(k)}$ can be used to model the influence of the number of users under the coverage of different caches and their activities.
On the other hand, each request is related to file $i$ with probability $p_i$, regardless of its generating cache. $p_i$ is referred to as the popularity of file $i$ and we have $\sum_{i=1}^{N} p_i =1$. Regarding that the popularity of the file and the requesting cache are independent, the probability that cache $k$ requests file $i$ is written as:
	\begin{equation}
	\label{eq:def2}
	P(k, i)=p_i \: p_g^{(k)}.
	\end{equation}

When cache $k$ queries file $i$, it requests for those chunks of file $i$ that are not available in its cache. Then, the server, being aware of the chunk placements at the caches, sends the un-cached chunks of file $i$. As such, the server transmits $Y^{(k)}_i=C-Z^{(k)}_i$ number of chunks back to cache $k$ over a shared link channel, where $Y^{(k)}_i$ is the random variable denoting the number of chunks transferred over the link given that cache $k$ has requested file $i$. The next part describes the adversary model and its related assumptions.

%TABLE NOTATION
\begin{table}[!t]
	\caption{Notations}
	\begin{center}
		\begin{tabular}{|cp{6.8cm}|cp{3.0cm}|}
			\hline
			Notation & Description \\
			\hline
			$N$ &Number of files in the library\\
			$K$ &Number of caches\\
			$M$ &Cache size\\
			$\zeta$ &Guaranteed value for $\Psi$\\
			$C$  &Number of chunks in a file \\
			$q_i^{(k)}$ &Probability that  cache $k$ stores  file $i$\\
			$p_i$ &Popularity probability of  file $i$\\
			$P_e$  &Error probability of ADV\\
			$\Psi$ &Privacy degree\\
			$\Omega$ &Total communication cost\\
			$p_g^{(k)}$ &Request generation probability of cache $k$\\
			$I$ &\Gls{RV} that shows the index of the requested file\\
			$U$ &\Gls{RV} that shows the index of the requesting user \\
			$\hat{i}$ &Estimated file index\\
			$\hat{k}$ &Estimated cache index\\ 
			$Z$ &\Gls{RV} that shows the number of stored chunks\\ $Y$ &\Gls{RV} that shows the number of non-cached chunks\\
			\hline
		\end{tabular}
	\end{center}
	\label{tab:symbols}
	\vspace{-0.28in}
\end{table}

\subsection{Adversary Model}
\label{subsec:Adversary Model}
The adversary is assumed to be a passive attacker in our system model, i.e., it does not decrypt or corrupt the communication over the shared link, e.g., due to solid encryptions applied on the transferred files. The adversary in our model only eavesdrops on the communication between the server and edge caches such that it counts the number of chunks transferred in response to every request. Then, it estimates the requested file and the requesting cache based on the observed number of transferred chunks. Also, we assume that the adversary is protocol-aware. More preciously, the adversary has complete knowledge about the probabilistic caching policies, $P^{(k)}(\mathbf{z})$, the files' popularities, $p_i$, and the probability of generating requests by the caches, $p_g^{(k)}$. This assumption corresponds to practical situations where the adversary is an authorized party in the network (honest but curious). Thus, it has full access to the network information broadcast by the server in the initiation phase, e.g., the adversary can be a code running on one of the edge-caches, which are authorized components in the network \cite{hassanpour2020context,roman2018mobile}.

Although the adversary knows the probabilistic caching protocols of different caches, it does not know which specific chunks are stored at each chunk. Thus, after observing the number of chunks transferred over the link, in response to one request, the adversary estimates the requested file and the requesting cache through employing the best estimation strategy, i.e., the MAP rule, with prior knowledge of $P^{(k)}(\mathbf{z})$, $p_i$, and $p_g^{(k)}$.

Let $\hat{k}(y)$ and $\hat{i}(y)$ denote the adversary's estimation of the requesting cache and requested file, respectively, given that $y$ chunks are transferred over the shared link. Then, according to the MAP rule, $(\hat{k}(y),\hat{i}(y))$ are derived as
%\vspace*{-0.28cm}
\begin{align}
\label{PD}
(\hat{k}(y),\hat{i}(y)) =& \argmax_{k\in\mathcal{K},i\in\mathcal{N}} P(k,i|Y=y) \\
\label{eq:PD1}
%\equiv 
=& \argmax_{k\in\mathcal{K},i\in\mathcal{N}} P(Y=y|k,i)p_i p_g^{(k)},
\end{align}
where $Y$ is the random variable denoting the number of chunks transferred over the shared link, and $P(k,i|y)$ is the probability that $k$ has requested $i$, given that $Y=y$. Moreover, \eqref{eq:PD1} is written using the Bayes' rule and \eqref{eq:def2}.

\subsection{Communication Cost}
\label{sec:CC}
Communication cost denoted by $\Omega$, is defined as the average number of files transferred over the shared link in response to one request at the CD phase, i.e., $\Omega=\frac{1}{C}\text{E}[Y]$. Therefore, using \eqref{eq:def2}, $\Omega$ is formulated as
\begin{equation}
\hspace*{-0.2cm}\Omega=\frac{1}{C}\sum_{k=1}^{K}\sum_{i=1}^{N}P(k, i) \text{E}[Y^{(k)}_i]= \frac{1}{C}\sum_{k=1}^{K} \sum_{i=1}^{N}p_g^{(k)} p_i E[Y^{(k)}_i],
\label{eq:CC}
\end{equation}
\vspace{-0.1cm}
where as noted before $Y^{(k)}_i$ is the random variable denoting the number of transferred chunks, given that cache $k$ requests file $i$. Moreover, $E[.]$ is the expectation operator.

\subsection{Privacy Degree}
\label{sec:PD}

As discussed in Section \ref{subsec:Adversary Model}, upon observing the number of chunks transferred over the shared link, the adversary uses MAP rule, as expressed in \eqref{PD}, to estimate the requesting cache and the requested file indices, i.e., $\hat{k}(y)$ and $\hat{i}(y)$, respectively. The error happens when the adversary does not correctly detect the requested cache or the requested file. Let us define $P_{e|y}$ as the error probability of the adversary given that $y$ chunks are observed. Then, $P_{e|y}$ is written as   
\vspace*{-0.2cm} 
\begin{equation}
P_{e|y} = \mathrm{Pr}\{ U \ne \hat{k}(y) ~\text{or}~I\ne \hat{i}(y)|Y=y\},
\label{eq:Pe}
\end{equation}
where $U$ and $I$ are random variables denoting the real requesting cache and requested file, respectively. Now, the privacy degree, denoted by $\Psi$, is defined as the error probability of the adversary and is derived as 
\vspace*{-0.3cm} 
\begin{equation}
\Psi= \sum_{y=0}^{C} P_{e|y} \Pr\{Y=y\}.
\label{eq:pd0}
\end{equation}

\subsection{Example}

Here, we bring a simple example in order to clarify the motivation behind using probabilistic caching for the aim of privacy-preserving. We compare two deterministic and probabilistic caching policies in our proposed scenario with the following parameters; the library contains two files A and B with popularities $p_A = 0.8$ and $p_B = 0.2$, there is one cache with capacity one file, i.e., $k =1$ and $M =1$, and the files are not divided in smaller portions, i.e., $ C = 1$. The deterministic caching scenario caches file A with probability one. According to the assumptions in Section \ref{subsec:Adversary Model}, the adversary is aware of the caching protocol, thus, knows that file A is cached. Consequently, it infers that files A and B are requested upon observing zero and one transferred file over the shared link. This leads to an error-free detection at the adversary. On the other hand, in the probabilistic caching scenario, files A and B are chosen to be cached with probabilities 0.7 and 0.3, respectively. Then, it can be seen that $\Pr(Y = 0|A)p_A=0.7 \times 0.8 > \Pr(Y=0|B) p_B=0.3 \times 0.2$ and $\Pr(Y = 1|A)p_A=0.3 \times 0.8 > \Pr(Y=1|B) p_B=0.7 \times 0.2$, which regarding \eqref{eq:PD1}, implies that the adversary always estimates file $A$ as the requested file upon observing either zero or one file, over the shared link. Thus, the error happens when file $B$ is requested, with probability $0.3$. Therefore, through applying probabilistic caching, privacy degree increases. However, this is at the cost of increasing the communication cost. As such, in deterministic policy the communication cost $\Omega$ is equal to $\Omega=p_B=0.2$, while in the deterministic case, we have $\Omega = 0.3p_A + 0.7p_B = 0.35$. 
  
Thus, when applying the probabilistic caching, a trade-off exists between the communication cost and the privacy degree. In this paper, we characterize this trade-off. In particular, in the next section, we minimize the communication cost over all probabilistic caching policies such that the privacy degree exceeds a desired threshold and show that the formulated optimization can be written as a Linear Programming (LP) optimization.

\section{Problem Formulation}
\label{sec:problemformulation}
In this section, we optimize the probabilistic caching policies at different caches in order to minimize the communication cost while keeping the privacy degree greater than a predefined threshold. Our goal is to solve the following optimization problem
\vspace*{-0.2cm}
\begin{equation}
\begin {aligned}
\label{eq:main_opt}
&\min_{{P^{(k)}}(\mathbf{z}),\atop\scriptstyle \mathbf{z}\in \mathcal{F}, k\in\mathcal{K}} \hspace{1.0cm} \Omega \\
&\;s.t. \quad \Psi \ge \zeta,\\
& \; \qquad \sum_{\mathbf{z}\in \mathcal{F}} P^{(k)}(\mathbf{z})=1, ~~~~~~k\in\mathcal{K},\\
& \; \qquad 0 \leq P^{(k)}(\mathbf{z})\leq 1, ~~~~~~\mathbf{z}\in \mathcal{F}, ~k\in\mathcal{K},
\end{aligned}
\end{equation}
where $\Omega$ and $\Psi$ are derived in \eqref{eq:CC} and \eqref{eq:pd0}, respectively. Also, $\zeta$ denotes the threshold associated with the privacy degree. Moreover, the second and third constraints in \eqref{eq:main_opt} assures $P^{(k)}(\mathbf{z})$ be a probability mass function. In the following, we first rewrite $\Omega$ and then $\Psi$ in terms of the probabilistic caching policies, i.e., $P^{(k)}(\mathbf{z})$'s.
$E\big[Y^{(k)}_i\big]$ in \eqref{eq:CC} can be written as
\vspace{-0.2cm}
\begin{align}
\label{exp}
E\big[Y^{(k)}_i\big]&= \sum_{y=0}^{C} y\Pr\{Y^{(k)}_i=y\}=\sum_{y=0}^{C} y \Pr\{Z^{(k)}_i=C-y\}.
\end{align}

In order to calculate $\Pr\{Z^{(k)}_i=C-y\}$, we need to consider any placement $z \in \mathcal{F}$ that contain exactly $C-y$ chunks of $i$. Thus, $\Pr\{Z^{(k)}_i=C-y\}$ is written as

\begin{equation}
\Pr\{Z^{(k)}_i=C-y\}= \sum_{\mathbf{z} \in \mathcal{F},\atop\scriptstyle z_i=C-y} P^{(k)}(z).
\label{eq:A}
\end{equation}
Using \eqref{exp} and \eqref{eq:A} in \eqref{eq:CC}, the communication cost is derived in terms of $P^{(k)}(\mathbf{z})$'s as
\vspace{-0.28cm}
\begin{equation}
\Omega=\frac{1}{C}\sum_{k=1}^{K} \sum_{i=1}^{N}   \sum_{y=0}^{C}p_g^{(k)} \: p_i \; y \sum_{z \in \mathcal {F},\atop\scriptstyle z_i=C-y}  P^{(k)}(\mathbf{z}).
\label{eq:CC_2}
\end{equation} 

Next we derive the privacy degree in terms of $P^{(k)}(\mathbf{z})$'s. From \eqref{eq:Pe} and \eqref{eq:pd0}, $\Psi$ is written as
\vspace{-0.1in}
\begin{subequations}
	\begin{align}
	\label{eq:pd1}
	&\Psi=\sum_{y=0}^{C} \mathrm{Pr}\{ U \neq \hat{k}(y) ~\text{or}~ I\neq \hat{i}(y) |Y=y\} \Pr\{Y=y\}\\
    \label{eq:pd2}	
	&=\sum_{y=0}^{C} (1-\mathrm{Pr}\{U = \hat{k}(y), I= \hat{i}(y)  |Y=y\}) \Pr\{Y=y\}\\
	\label{eq:pd3}	
	&=\sum_{y=0}^{C} (1-\max_{k,i} P( k,i|Y=y))  \Pr\{Y=y\}\\
	\label{eq:pd4}
	&= 1- \sum_{y=0}^{C}\max_{k,i} \: p_i\; p_g^{(k)} \; P(Y=y|k,i) ,
	\end{align}
	\label{eq:pd}
\end{subequations}
where \eqref{eq:pd2} is replaced with \eqref{eq:pd3} using the MAP rule in \eqref{PD}. Moreover, the argument of the maximization in \eqref{eq:pd4} is written using the Bayes' rule. Using $\Pr\{Y^{(k)}_i=y\}=\Pr\{Z^{(y)}_i=C-y\}$ and \eqref{eq:A}, $P(Y=y|k,i)$ in \eqref{eq:pd4}, can be written as
\vspace{-0.18in}
\begin{equation}
\begin{aligned}
P(Y=y|k,i)=\sum_{\mathbf{z} \in \mathcal{F},\atop\scriptstyle z_i=C-y} P^{(k)}(\mathbf{z}).%\\
\end{aligned}
\label{eq:com}
\end{equation} 
Finally, by applying \eqref{eq:com} in \eqref{eq:pd4}, $\Psi$ is written in terms of probabilistic caching policies as 
\vspace*{-0.4cm}
\begin{equation}
\begin{aligned}
\Psi=1 -\sum_{y=0}^{C}  \max_{k,i} \hspace{0.2cm} p_i .p_g^{(k)} .\sum_{\mathbf{z} \in \mathcal{F},\atop\scriptstyle z_i=C-y} P^{(k)}(\mathbf{z}) .
\end{aligned}
\label{eq:pdfinal}
\end{equation}
In the next part, we use the derived equations for $\Omega$ and $\Psi$ to formulate the optimization problem as an LP optimization.
\subsection{Linear Programming Optimization}
\label{sec:lp_opt}
Using \eqref{eq:CC_2} and \eqref{eq:pdfinal} in \eqref{eq:main_opt}, the optimization problem is rewritten as 
%\vspace*{-0.2cm}
\begin{equation}
\begin {aligned}
\label{eq:main_opt2}
\min_{{P^{(k)}}(\mathbf{z}),\atop\scriptstyle \mathbf{z}\in \mathcal{F}, k\in\mathcal{K}} &\hspace{0.5cm}  \sum_{k=1}^{K} \sum_{i=1}^{N}   \sum_{y=0}^{C} p_i \; p_g^{(k)}\; y \sum_{\mathbf{z} \in \mathcal {F},\atop\scriptstyle z_i=C-y} P^{(k)}(\mathbf{z})\\
s.t \hspace{0.4cm}& \quad 1-\sum^C_{y=0} \max_{k,i} p_i  p_g^{(k)} \sum_{\mathbf{z} \in \mathcal{F},\atop\scriptstyle z_i=C-y} P^{(k)}(\mathbf{z}) \ge \zeta,\\
& \quad  \sum_{\mathbf{z}\in \mathcal{F}} P^{(k)}(\mathbf{z})=1, ~~~~~~k\in \mathcal{K},\\
&  \quad 0 \leq P^{(k)}(\mathbf{z})\leq 1, ~~~~~~\mathbf{z}\in \mathcal{F}, ~k\in\mathcal{K}.
\end{aligned}
\end{equation}

Note that in \eqref{eq:main_opt2}, the objective function and all constraints except the first one are linear in terms of $P^{(k)}(\mathbf{z})$. In this regard, we use additional variables, $\Gamma_y$ ($y\in\{0,1,\dots,C\}$), to change the optimization problem into a linear one. In particular, the first constraint in \eqref{eq:main_opt2} is replaced with the following inequality 
\vspace{-0.3cm}
\begin{equation}
1-\sum_{y=0}^{C} \Gamma_y \ge \zeta.
\label{eq:PD_2}
\end{equation}
Moreover, the following constraints on $\Gamma_y$'s are added to the optimization problem 
\begin{equation}
\Gamma_y \ge p_i \; p_g^{(k)} \sum_{\mathbf{z} \in \mathcal{F},\atop\scriptstyle z_i=C-y} P^{(k)}(\mathbf{z}) \quad \forall{k,i}, y.
\label{eq:Gamma}
\end{equation}
Note that the inequalities \eqref{eq:Gamma} imply that $\Gamma_y \ge \max_{k,i} \quad p_i \; p_g^{(k)}\; \sum_{\mathbf{z} \in \mathcal{F},\atop\scriptstyle z_i=C-y} P^{(k)}(\mathbf{z}),~ \forall y.$  

Using \eqref{eq:PD_2} and \eqref{eq:Gamma}, the optimization problem turns into
\begin{equation}
\begin{aligned}
\mathcal{P}_1:~&\hspace*{-0.2cm}\min_{\Gamma_y,{P^{(k)}}(\mathbf{z}),\atop\scriptstyle 0\le y\le C,\mathbf{z}\in \mathcal{F}, k\in\mathcal{K}} \hspace{0.2cm} \sum_{k=1}^{K} \sum_{i=1}^{N}   \sum_{y=0}^{C}p_g^{(k)} \; p_i \; y  \sum_{z \in \mathcal {F},\atop\scriptstyle z_i=C-y} P^{(k)}(\mathbf{z})\\
&\hspace{0.8cm}s.t. \hspace{0.3cm}1-\sum_{y=0}^{C} \Gamma_y \ge \zeta\\
&\hspace{1.6cm}\Gamma_y \ge p_i \; p_g^{(k)} \sum_{\mathbf{z} \in \mathcal{F},\atop\scriptstyle z_i=C-y} P^{(k)}(\mathbf{z}) \; \forall{k,i,y}, \\
&\hspace{1.6cm} \sum_{\mathbf{z}\in \mathcal{F}} P^{(k)}(\mathbf{z})=1, ~~~~~~k\in\mathcal{K}\\
&\hspace{1.6cm} 0 \leq P^{(k)}(\mathbf{z})\leq 1, ~~~~~~\mathbf{z}\in \mathcal{F}, ~k\in\mathcal{K}.
\label{eq:LP_General}
\end{aligned}
\end{equation}
Note that  two optimization problems \eqref{eq:main_opt2} and \eqref{eq:LP_General} are equivalent and thus, any optimal $P^{(k)}(\mathbf{z})$ in \eqref{eq:LP_General} is the solution of the optimization problem \eqref{eq:main_opt2}. As can be seen, the optimization $\mathcal{P}_1$ in \eqref{eq:LP_General} is linear in terms of $P^{(k)}(\mathbf{z})$'s and $\Gamma_y$'s, and thus, is an LP optimization. Moreover, it is worth noting that in symmetric scenarios where caches generate the requests with almost the same probabilities, i.e., $p_g^{(k)}=p_g$, all caches have the same optimal probabilistic caching policies, and it suffices to optimize $P(z)$. 

\subsection{Hit Ratio Constrained JPC}
\label{subsec:hit-ratio-constrained-JPC}
One advantage of chunk-based caching in JPC is to increase the cache hit ratio, where the hit ratio of a specific file at cache $k$ is defined as the probability that at least one of its chunks exists in the cache. In practice, especially in the case of video-type contents, once the end-user is enjoying the directly received chunks from the cache, the cache requests the un-cached chunks, leading to less experienced delay at end-users. In this regard, we incorporate the average cache hit ratio constraint into the optimization problem $\mathcal{P}_1$, as in the following.
\begin{equation}
\sum_k p^{(k)}_g h^{(k)} \leq \beta,
\label{eq:hit_ratio_avg}
\end{equation}
where $\beta$ is a constant threshold and $h^{(k)}$ denotes the hit-ratio of cache $k$. $h^{(k)}$ is written as $h^{(k)} = \sum_{i} p_i \sum_{\mathbf{z} \in \mathcal {F},\atop\scriptstyle z_i\neq 0} P^{(k)}(\mathbf{z}),$ 

where it is assumed that the request for a file hits the cache if at least one chunk of the requested file exists in the cache. It is worth noting that if we do not apply chunk-based caching in JPC, i.e., $z_i \in\{0,C\}$ instead of $z_i \in\{0,1,2,\cdots,C\}$, from \eqref{eq:CC_2}, it can be seen that the constraint $\sum_k p^{(k)}_g h^{(k)} \geq \beta$ is equivalent to $\Omega \leq 1-\beta$.

\begin{figure}[t!]
	\centering
	\includegraphics[width=0.8\linewidth]{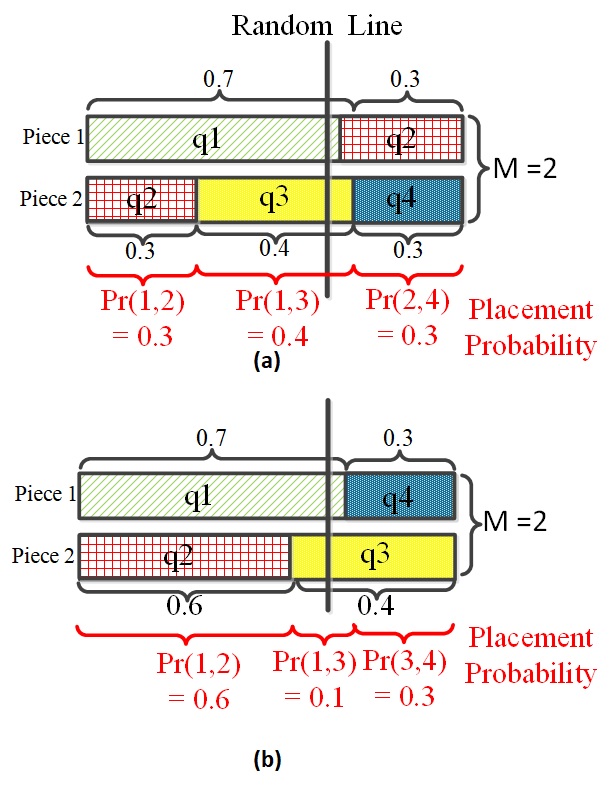}
	\caption{Cache Placement in DPC approach, with filling order of files a) $(1,2,3,4)$ b) $(1,4,2,3)$.}
	\label{fig:DPC_Approach}
\end{figure}
The number of variables in the LP optimization \eqref{eq:LP_General} is equal to $(C+1)KX$, where $X$ is the size of $\mathcal F$, i.e., the set of all possible chunk placements at caches. A bottleneck in solving \eqref{eq:LP_General} is the derivation of the set of all feasible placements, which grows explosively with the number of files, chunks, and cache sizes. Although large-scale LP optimization techniques, such as CVX, can handle large-scale problems, the complexity remains in the derivation of the set $\mathcal{F}$, mainly when the edge servers have limited energy and processing power. In this regard, we propose two probabilistic caching policies with less complexity in the rest of the paper. As such, in the next section, we optimize a non-chunk-based probabilistic caching policy introduced in \cite{blaszczyszyn2015optimal}, which reduces the number of variables to $NK$. Then, in Section \ref{sec:SPC}, we introduce the scalable version of SPC and compare its results against the optimal JPC in Section \ref{sec:evaluation_simulation}.

\section{Disjoint Probabilistic Caching (DPC) }
\label{sec:DPC}

In this section, we optimize the probabilistic caching protocol proposed in \cite{blaszczyszyn2015optimal}. In this method, the files are cached completely, thus, the set of feasible placements is
\vspace{-0.28cm} \begin{align}
\label{eq:def1feas}
\tilde{\mathcal{F}}=&\{\tilde{\mathbf{z}}| \tilde{z}_i \in\{0,C\}, i\in \{1,...,N\}, \sum_{i=1}^{N}\tilde{z}_i=MC\}.
\end{align} 

Moreover, in DPC, rather than indicating a distribution over $\tilde{\mathcal{F}}$, the probability of caching each file is determined. Then, it is shown that under some conditions, there exists a distribution over $\tilde{\mathbf{F}}$ that yields the indicated probabilities. In this regard, to be consistent with JPC, we refer to this policy as disjoint probabilistic caching (DPC). More precisely, DPC is defined as follows.

\begin{definition}
	\label{def:DPC}
	In DPC, the probability of storing file $i$ at cache $k$, denoted by $\alpha_i^{(k)}$, is indicated for $\forall i\in\mathcal{N}, \forall k\in \mathcal{K}$. In order to ensure that there exists a distribution $Q^{(k)}(\tilde{\mathbf{z}})$ over $\tilde{F}$ that yields caching probabilities $\alpha_i^{(k)}$, it is required that $\sum_{i = 1}^{N} \alpha_i^{(k)} = M $. 
\end{definition}

Note that in the above definition, $Q^{(k)}(\tilde{\mathbf{z}})$ yields $\alpha^{(k)}_i$ if the equality $\sum_{\tilde{\mathbf{z}} \in\tilde {\mathcal F},\atop\scriptstyle \tilde{z}_i= C} Q^{(k)}(\tilde{\mathbf{z}})=\alpha_i^{(k)}$ holds. In order to clarify the DPC method and its placement strategy, we bring a simple example here. Assume a scenario with $N = 4$ library files and one cache with capacity of two files, i.e., $K=1$ and $M=2$. Also, the caching probabilities of the files are chosen to be $\alpha^{(1)}_1 = 0.7$, $\alpha^{(1)}_2 = 0.6$, $\alpha^{(1)}_3 = 0.4$, and $\alpha^{(1)}_4 = 0.3$, satisfying the equality $\sum_{i = 1}^{N} \alpha_i^{(1)} = 2 $. The placement strategy in DPC works as follows. The cache is divided into $M$ (2 in our example) intervals of length one, which these intervals are placed vertically under each other, as shown in figure \ref{fig:DPC_Approach}.a. Then, the intervals are filled with $\alpha^{(k)}_i$'s (at any desired order) one after another, without replacement. Note that since $\sum_i \alpha^{(k)}_i =M$, all intervals are completely covered. Afterwards, a random number is chosen uniformly in the interval $[0,1]$, and then, according to the chosen number, a vertical line passes through all $M$ intervals, which indicates the set of $M$ files to be cached. The indicated files are definitely distinct since $\alpha_i^{(k)} \leq 1 $. For more detail please refer to \cite{blaszczyszyn2015optimal}. The aforementioned strategy is illustrated in figure \ref{fig:DPC_Approach}.a, where the intervals are filled with $\alpha^{(1)}_i$'s in ascending order of their indexes. Consequently, the distribution $Q^{(k)}(\tilde{\mathbf{z}})$ will be $Q^{(k)}(\tilde{\mathbf{z}} = (1,2)) = 0.3$, $Q^{(k)}(\tilde{\mathbf{z}}  = (1,3)) = 0.4$ and $Q^{(k)}(\tilde{\mathbf{z}}  = (2,4)) = 0.3$, and zero, otherwise.

When optimizing DPC, we optimize the probability of caching individual files, i.e., $\alpha_i^{(k)}$. Then, we find a corresponding distribution over $\tilde{\mathcal{F}}$, using the DPC caching strategy explained above. In particular, the DPC optimization is written as
\vspace*{-0.3cm}
\begin{equation}
\begin {aligned}
\label{eq:opt_proposed_policy}
\min_{\alpha^{(k)}_i: k\in\mathcal{K}, i\in \mathcal{N}}& \hspace{0.5cm} \Omega \\
s.t.\hspace{0.5cm} 
&\Psi \ge \zeta,\\
& 0\le \alpha_i^{(k)} \le 1 \quad \forall{k,i},\\
&\sum_{i=1}^{N}\alpha_i^{(k)}=M, 
\end{aligned}
\end{equation}
where the last constraint ensures that a distribution over $\tilde{\mathcal{F}}$ can be found to satisfy the file caching probabilities $\alpha_i^{(k)}$. Moreover, the communication cost ($\Omega$) is written in terms of $\alpha_i^{(k)}$'s as follows:
\vspace*{-0.35cm}
\begin{equation}
\Omega = 1 - \sum_{k=1}^{K}\sum_{i = 1}^{N} p_g^{(k)} p_i \alpha_i^{(k)},
\end{equation}
where the second term in the above equation indicates the probability that no file is transferred as the response to a typical request. Also, the privacy degree is derived from \eqref{eq:pd4}, except that the summation has two terms, corresponding to values $y = 0$ and $y = C$, respectively. Thus, the privacy degree is written as: 
\vspace{-0.28cm}
\begin{equation}
\Psi = 1 - \max_{i,k} p_g^{(k)} p_i \alpha_i^{(k)} - \max_{i,k} p_g^{(k)} p_i (1-\alpha_i^{(k)})
\label{eq:PD_1}
\end{equation}
where the first and second maximizations correspond to $y = 0$ and $y = C$, respectively. Using the same procedure as in Section \ref{sec:lp_opt}, we can turn the optimization problem in \eqref{eq:opt_proposed_policy} into an LP as follow.
\vspace{-0.28cm}
\begin{equation}
\begin {aligned}
\label{eq:opt_proposed_policy2}
\mathcal{P}_2:~\min_{\alpha^{(k)}_i: k\in\mathcal{K}, i\in \mathcal{N}}& \hspace{0.5cm} \sum_{k=1}^{K}\sum_{i = 1}^{N} p_g^{(k)}p_i (1-\alpha_i^{(k)}) \\
s.t.\hspace{0.5cm} 
& 1 - \gamma_0 - \gamma_1 \ge \zeta,\\
&\gamma_0 \ge p_g^{(k)}p_i \alpha_i^{(k)} \forall{k,i}\\
&\gamma_1 \ge p_g^{(k)}p_i (1-\alpha_i^{(k)})\forall{k,i}\\
& 0\le \alpha_i^{(k)} \le 1 \quad \forall{k,i},\\
&\sum_{i=1}^{N}\alpha_i^{(k)}=M.
\end{aligned}
\end{equation}

In the following, we compare the performance of JPC against the DPC approach and clarify its benefits over the DPC.

\subsection{DPC versus JPC}
\label{subsec:dpc_vs_jpc}
%\textcolor{blue}{We must prove that the DPC approach equals to the JPC approach. }
In this part, we first prove the following Lemma.
\begin{lemma}
	\label{lemmas:1}
	The chunk-based optimization of the JPC in \eqref{eq:LP_General} has the same performance as the DPC  optimization in \eqref{eq:opt_proposed_policy}.
\end{lemma}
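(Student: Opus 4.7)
The plan is to prove equality of the two optimal values through a two-way reduction. The easy direction, $\Omega^{*}_{\text{JPC}} \le \Omega^{*}_{\text{DPC}}$, is immediate: since $\tilde{\mathcal{F}} \subset \mathcal{F}$, any joint distribution $Q^{(k)}$ associated with a DPC solution (guaranteed to exist by Definition~\ref{def:DPC}) is itself a valid JPC distribution supported on $\tilde{\mathcal{F}}$. On such placements the transferred count $Y_i^{(k)}$ takes only the values $0$ or $C$, so the JPC expressions \eqref{eq:CC_2} and \eqref{eq:pdfinal} collapse term-by-term to the DPC expressions developed in Section~\ref{sec:DPC}.

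The real work is the reverse inequality $\Omega^{*}_{\text{DPC}} \le \Omega^{*}_{\text{JPC}}$. Given any JPC-feasible policy $P^{(k)}(\mathbf{z})$, I would define the induced file-level probabilities $\alpha_i^{(k)} := \tfrac{1}{C}\,E[Z_i^{(k)}] = \tfrac{1}{C}\sum_{c=0}^{C} c\,\Pr\{Z_i^{(k)} = c\}$. Because $0 \le Z_i^{(k)} \le C$ and $\sum_{i} Z_i^{(k)} = MC$ almost surely under any JPC policy, it follows immediately that $0 \le \alpha_i^{(k)} \le 1$ and $\sum_i \alpha_i^{(k)} = M$, so $\alpha$ is DPC-feasible. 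Moreover $E[Y_i^{(k)}] = C - E[Z_i^{(k)}] = C(1-\alpha_i^{(k)})$, so substituting into \eqref{eq:CC} yields $\Omega_{\text{JPC}}(P) = \Omega_{\text{DPC}}(\alpha)$ term by term.

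The main obstacle is establishing $\Psi_{\text{DPC}}(\alpha) \ge \Psi_{\text{JPC}}(P)$ under this mapping. Setting $w_{k,i,y} := p_i\, p_g^{(k)}\, \Pr\{Z_i^{(k)} = C-y\}$, the definition of $\alpha_i^{(k)}$ gives $p_i p_g^{(k)} \alpha_i^{(k)} = \sum_{y=0}^{C} \tfrac{C-y}{C}\, w_{k,i,y}$ and $p_i p_g^{(k)}(1-\alpha_i^{(k)}) = \sum_{y=0}^{C} \tfrac{y}{C}\, w_{k,i,y}$. Together with \eqref{eq:pdfinal} and \eqref{eq:PD_1}, the claimed privacy inequality reduces to
\[
\max_{k,i}\sum_{y}\tfrac{C-y}{C}\, w_{k,i,y} \;+\; \max_{k,i}\sum_{y}\tfrac{y}{C}\, w_{k,i,y} \;\le\; \sum_{y}\max_{k,i} w_{k,i,y},
\]
which follows from the elementary bound $\max_{k,i}\sum_y \mu_y w_{k,i,y} \le \sum_y \mu_y \max_{k,i} w_{k,i,y}$ for any nonnegative weights $\mu_y$, applied once with $\mu_y=(C-y)/C$ and once with $\mu_y=y/C$ and then added; since the two weight sequences sum pointwise to $1$, the two right-hand sides assemble into exactly $\sum_y \max_{k,i} w_{k,i,y}$.

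Combining the two directions closes the argument: every JPC-feasible policy admits a DPC-feasible counterpart with identical $\Omega$ and no smaller $\Psi$, while every DPC-feasible solution embeds into JPC without change. I expect the only real technical friction to be bookkeeping, namely keeping the index substitution $c = C-y$ straight and handling the joint maximization over both $k$ and $i$ correctly; everything else reduces to linearity of expectation and a single application of the max-sum swap inequality.
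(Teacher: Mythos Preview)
Your proposal is correct and follows essentially the same two-direction reduction as the paper: the mapping $\alpha_i^{(k)}=\tfrac{1}{C}E[Z_i^{(k)}]$ and the weighted averaging with weights $(C-y)/C$ and $y/C$ are exactly the devices the paper uses. The only cosmetic difference is that the paper carries out the privacy comparison through the LP auxiliary variables, defining $\gamma_0=\tfrac{1}{C}\sum_y(C-y)\Gamma_y$ and $\gamma_1=\tfrac{1}{C}\sum_y y\,\Gamma_y$ and checking the constraints of $\mathcal{P}_2$, whereas you bypass the LP slack variables and state the underlying max--sum swap inequality $\max_{k,i}\sum_y\mu_y w_{k,i,y}\le\sum_y\mu_y\max_{k,i}w_{k,i,y}$ explicitly; the two are the same computation, yours being arguably the cleaner phrasing.
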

\begin{proof}
In order to prove that optimizing JPC and DPC result in the same optimum communication cost, in the first step, we prove that any feasible point of optimization $\mathcal{P}_1$ corresponds to a feasible point of optimization $\mathcal{P}_2$, with the same communication cost.
Suppose that $P^{(k)}(\mathbf{z})$ and $\{\Gamma_0,\Gamma_1,...,\Gamma_C\}$ indicate a feasible point of $\mathcal{P}_1$ in \eqref{eq:LP_General}. Also, let $q^{(k)}_{i,x}$ be the probability that $x$ chunks of file $i$ are cached at cache $k$ in the JPC method, i.e., $q^{(k)}_{i,x}=\sum_{\mathbf{z}:z_i=x} P^{(k)}(\mathbf{z})$. Now we show that $\{\alpha_i^{(k)},\gamma_0,\gamma_1\}$, defined as 
\begin{equation} 
\alpha^{(k)}_i=1-\frac{1}{C}\sum^C_{y=0} y q^{(k)}_{i,C-y}.
\label{eq:lem1}
\end{equation}
\vspace*{-0.3cm}
\begin{equation}
\gamma_0=\frac{1}{C} \sum^C_{y=0} (C-y) \Gamma_y,~~\gamma_1=\frac{1}{C}\sum^{C}_{y=0} y \Gamma_y,  ~~
\label{eq:lem}
\end{equation}  
is a feasible solution of optimization $\mathcal{P}_2$, with the same communication cost as that of $P^{(k)}(\mathbf{z})$ in $\mathcal{P}_1$. The latter is obvious through observing that the communication costs in $\mathcal{P}_1$ and $\mathcal{P}_2$ are written as $\sum_{k} \sum_{i}  p^{(k)}_g p_i (1-\alpha^{(k)}_i)$ and $\sum_{k} \sum_{i}  p^{(k)}_g p_i \sum^{C}_{y=0} y q^{(k)}_{i,C-y}$, respectively, and thus, are equal according to \eqref{eq:lem1}. Now to show that $\{\alpha_i^{(k)},\gamma_0,\gamma_1\}$ is a feasible solution of $\mathcal{P}_2$, note that the first inequality in $\mathcal{P}_2$ is equivalent to the first inequality in $\mathcal{P}_1$ since from \eqref{eq:lem}, we have $\gamma_0+\gamma_1=\sum^C_{y=0}\Gamma_y$. Moreover, from the second constraint in $\mathcal{P}_2$, which is rewritten in terms of $q^{(k)}_{i,x}$ as $\Gamma_y \geq p^{(k)}_g p_i q^{(k)}_{i,C-y}$, and definitions of $\gamma_0$ and $\gamma_1$ in \eqref{eq:lem1}, we conclude that $\gamma_0 \geq  p^{(k)}_g p_i \sum^C_{y=0} \frac{C-y}{C} q^{(k)}_{i,C-y}=p^{(k)}_g p_i \alpha^{(k)}_i$, and $\gamma_1 \geq  p^{(k)}_g p_i \sum^C_{y=0} \frac{1}{C} q^{(k)}_{i,C-y}=p^{(k)}_g p_i (1-\alpha^{(k)}_i)$, i.e., the second and third constraints in $\mathcal{P}_2$ also hold. Finally, it remains to prove the last inequality in $\mathcal{P}_2$. Since each placement $\mathbf{z} \in \mathcal{F}$ has exactly $MC$ chunks, the average number of chunks saved in cache $k$ under policy $P^{(k)}(\mathbf{z})$ equals $MC$, i.e., $\sum^N_{i=1}\sum^C_{x=0} xq^{(k)}_{i,x} = MC$. Using this equality and definition of $\alpha^{(k)}_i$ in \eqref{eq:lem1}, we conclude $\sum_{i=1}^{N} \alpha^{(k)}_i = M$. This completes the proof of the first step. 

In the second step, we need to show that any feasible solution of $\mathcal{P}_2$, $\{\alpha^{(k)}_i,\gamma_0,\gamma_1\}$, corresponds to a feasible solution of $\mathcal{P}_1$, $\{P^{(k)}(\mathbf{z}),\{\Gamma_y\}_{y=0}^N\}$, with the same communication cost. To show this, we define $P^{(k)}(\mathbf{z})$ as follows. We set $P^{(k)}(\mathbf{z})$ equal to zero for any placement $\mathbf{z}\in \mathcal{F}$ that caches at least one file partially, i.e., $P^{(k)}(\mathbf{z})=0$ if there exists $i$ such that $z_i \notin \{0,C\}$. Moreover, using the DPC placement strategy and caching probabilities $\alpha^{(k)}_i$, we derive a distribution over $\tilde{\mathcal{F}}$ and assign it to $P^{(k)}(\mathbf{z})$. Moreover, we set $\Gamma_0=\gamma_0$, $\Gamma_C=\gamma_1$, $\Gamma_y=0$ for $y\notin \{0,C\}$. Then, it is observed the constraint and objective functions in $\mathcal{P}_1$ and $\mathcal{P}_2$ are the same. In fact, the communication cost and constraints in $\mathcal{P}_1$ are written in terms of $q^{(k)}_{i,0}$ and $q^{(k)}_{i,C}$, which are equivalent to parameters $1-\alpha^{(k)}_i$ and $\alpha^{(k)}_i$, in $\mathcal{P}_2$. This completes the proof.

\end{proof}

The above lemma shows that caching chunks of files does not help in decreasing the communication cost while keeping the privacy degree above a specific threshold. The intuition behind the above lemma is that the privacy degree improves whenever the adversary can infer less information from the number of transferred chunks. Thus, if we cache the same number of chunks for any file that will be cached, then the number of observed chunks gives no information to the adversary if one of the cached files is requested. Therefore, it is efficient to cache either the entire chunks of a file or none, leading to the following corollary.

\begin{cor}
The feasible set $\mathcal{F}$ in optimization \eqref{eq:LP_General} can be changed to $\tilde{\mathcal{F}}$. We refer to this optimization as non-chunk-based JPC.
\end{cor}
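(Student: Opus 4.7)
The plan is to deduce this corollary directly from Lemma 1 by exploiting the constructive nature of its proof. Lemma 1 already establishes the equality of the optimal values of $\mathcal{P}_1$ (JPC over $\mathcal{F}$) and $\mathcal{P}_2$ (DPC). The crucial observation I would emphasize is that the second step of the proof of Lemma 1 is explicit: from any feasible solution $\{\alpha^{(k)}_i,\gamma_0,\gamma_1\}$ of $\mathcal{P}_2$, it constructs a feasible solution $\{P^{(k)}(\mathbf{z}),\{\Gamma_y\}\}$ of $\mathcal{P}_1$ whose support is contained in $\tilde{\mathcal{F}}$, because all placements that cache at least one file only partially are assigned probability zero, and the remaining mass is distributed via the DPC placement strategy over $\tilde{\mathcal{F}}$.

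From here the argument is a one-liner. First I would take an optimal DPC solution and apply that construction to obtain a JPC distribution $P^{(k)}(\mathbf{z})$ concentrated on $\tilde{\mathcal{F}}$ which attains the common optimal value guaranteed by Lemma 1. Second, I would observe that this witness is simultaneously feasible for the restricted optimization obtained from \eqref{eq:LP_General} by replacing $\mathcal{F}$ with $\tilde{\mathcal{F}}$. Since the restricted problem has a smaller feasible set than the original, its optimum can only be no better than that of \eqref{eq:LP_General}; but the witness just constructed attains the original optimum, so the two optima coincide and the corollary follows.

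I do not expect any genuine obstacle. The only item that deserves a moment of care is verifying that setting $\Gamma_y=0$ for $y\notin\{0,C\}$ is consistent with the inequality $\Gamma_y\ge p_i p_g^{(k)}\sum_{\mathbf{z}\in\mathcal{F},\,z_i=C-y} P^{(k)}(\mathbf{z})$; this is immediate, because under the restricted support $z_i\in\{0,C\}$, the sum on the right-hand side vanishes whenever $0<y<C$. The remaining constraints — the probability-mass constraints on $P^{(k)}(\mathbf{z})$ and the privacy inequality $1-\sum_{y=0}^{C}\Gamma_y\ge\zeta$ — are inherited verbatim from the construction in the proof of Lemma 1, as are the equalities $\Gamma_0=\gamma_0$ and $\Gamma_C=\gamma_1$ that make the objective values match.
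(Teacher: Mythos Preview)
Your proposal is correct and matches the paper's approach: the paper states this corollary as an immediate consequence of Lemma~\ref{lemmas:1} without giving a separate proof, and your argument is precisely the natural way to make that deduction explicit. Your care in checking that $\Gamma_y=0$ for $0<y<C$ is admissible under the restricted support is a nice touch that the paper leaves implicit.
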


Although both optimized non-chunk-based JPC ‌and DPC result in the exact average communication cost, the difference is between the chosen distributions over $\tilde{\mathcal{F}}$. In fact, in the DPC, the cache distribution is restricted by the caching strategy introduced in \cite{blaszczyszyn2015optimal}, i.e., in what order $q^{(k)}_i$'s are filled in cache intervals. As an example, in Figure \ref{fig:DPC_Approach}, two different fillings of cache intervals result in two different cache placement distributions. The placement with the most popular files is selected with a higher probability in the second distribution, implying a better short-term performance. 
Thus, if we aim to choose a distribution that allocates greater probabilities to the placements with more popular files, this requires investigating different filling orders of the intervals in the caching strategy of the DPC approach to find the desired one. This task becomes highly complex, especially when the number of files and cache sizes increase. However, in optimizing JPC, the LP solver, i.e., CVX, automatically chooses the distribution, and our numerical results show that the chosen distribution allocates higher probabilities to the placements that include more popular files.

Another advantage of the JPC over the DPC is that it provides the possibility of chunk-based caching, and chunk-based placement, on the other hand, could be used to increase the cache-hit ratio, as discussed in Section \ref{subsec:hit-ratio-constrained-JPC}. Caching a few chunks of a less popular file in JPC leads to the hit-ratio of that file to be equal to one, while a hit-ratio equal to one is obtained in DPC only by caching the file entirely, which in turn increases the communication cost. Also, the feasible set $\mathcal{F}$ in JPC can be refined in order to include the desirable placements. For example, assume that we need to cache at least $x$ chunks of a specific file(s) to satisfy its strict end-user delay constraints. Then, we can refine the feasible set $\mathcal{F}$ such that it includes only the placements that have at least $x$ chunk of the specified file. Overall, it can be seen that whenever the short-term performance or the role of chunks becomes more important, the JPC approach is superior to the DPC, since it provides the cache placement directly. However, the number of optimization parameters in the DPC approach is lower, and thus, is preferable once the average performance of communication cost and the privacy degree is solely important. To benefit from the advantages of SPC and cope with its complexity, in the next section, we propose a scalable version of the JPC.

\section{Scalable JPC}
\label{sec:SPC}
As mentioned in Section \ref{subsec:hit-ratio-constrained-JPC}, deriving optimal JPC may be a complex and time-consuming task, especially when the number of chunks is greater than one, since it requires the computation of all feasible chunk placements. In this section, we propose a scalable version of the JPC, in which the complexity of deriving the feasible placements decreases through grouping the files into disjoint subsets. Moreover, the performance of the proposed approach can become arbitrary close to optimal JPC by increasing the number of subsets. We refer to this new version as Scalable Probabilistic Caching (SPC). 

Suppose that $S = \{S_1,S_2,\dots, S_L\}$ is a partition of $\mathcal{N}$, i.e., $\forall l,~S_l \neq \emptyset$, $\cap_{l=1}^N S_l=\mathcal{N}$, and $S_l \cap S_k = \emptyset,~\forall l,k$. Moreover, any file in $S_l$ is more popular than any file in $S_k$, given that $l < k$, i.e., $p_i>p_j,~ \forall i\in S_k, ~ \forall j\in S_l$. Then, assuming that the popularity of files decreases with their index, we have $S_l=\{\sum_{k=1}^{l-1}|S_{k}|+1,...,\sum_{k=1}^{l-1}|S_{k}|+|S_l|\}$. Also, let $\hat{\mathbf{z}}=(\hat{z}_1,\hat{z}_2,...,\hat{z}_L)$ be a vector of length $L$, where $\hat{z}_l$ indicates the number of chunks cached from subset $S_l$. Then, the set of feasible placements, denoted by $\hat{\mathcal{F}}$, is written as
\vspace*{-0.3cm}
\begin{equation}
\hat{\mathcal{F}}=\{\hat{\mathbf{z}}=(\hat{z}_1,\dots,\hat{z}_l)|0 \leq \hat{z}_l \leq |S_l|C,~ \sum_{l=1}^{L} \hat{z}_l = MC\}. 
\label{eq:F_hat}
\end{equation}

It is worth noting that in order to choose $\hat{z}_l$ chunks from $S_l$, the chunks are chosen in a uniformly random manner, one after another without replacement.

\begin{definition}{(SPC)\\}
	\label{def:SPC}
	The scalable JPC (SPC) policy at cache $k$ is defined as a probability distribution over $\hat{\mathcal{F}}$, denoted by $O^{(k)}(\hat{\mathbf{z}})$. Let $\hat{\mathbf{Z}}^{(k)}=(\hat{Z}^{(k)}_1,\hat{Z}^{(k)}_2,...,\hat{Z}^{(k)}_N)$ be a random vector indicating the chunk placement at cache $k$, where $\hat{Z}^{(k)}_i$ is the random variable denoting the number of chunks cached from subset $i$ at cache $k$. Then, $P^{(k)}(\hat{\mathbf{z}})=\Pr\{\hat{\mathbf{Z}}^{(k)}=\hat{\mathbf{z}}\}$.
\end{definition}

\subsection{Communication Cost and PD in SPC}
Let $a_l=\sum_{i\in S_l} p_i$ be the probability that one of the files in subset $S_l$ is requested. Then, the communication cost of SPC is derived as
\vspace*{-0.28cm}
\begin{equation}
\Omega =\frac{1}{C}\sum^{K}_{k=1} \sum_{l=1}^{L}  \sum_{x=0}^{|S_l| C} p_g^{(k)} \; a_l \;  q^{(k)}_{l,x}\;(C-\frac{x}{|S_l|}),
\label{eq:cc-SPC}
\end{equation}
where $q_{l,x}^{(k)}$ denotes the probability of caching $x$ chunks from subset $S_l$ at cache $k$, and thus, is written as
\vspace{-0.28cm}
\begin{equation} 
q_{l,x}^{(k)}=\sum_{\hat{\mathbf{z}}:\hat{z}_l=x}O^{(k)}(\hat{\mathbf{z}}).
\label{eq:q_l_x}
\end{equation}
Moreover, regarding the chunk selection method in SPC, each chunk in the subset $S_l$ is chosen with probability $\frac{x}{|S_l| C}$, given that $x$ chunks is selected from subset $S_l$. Thus, the average number of un-cached chunks of any file in subset $S_l$ is equal to $C-\frac{x}{|S_l|}$, i.e., the last multiplicative term in \eqref{eq:cc-SPC}.

In order to derive privacy degree, we first derive $\Pr(Y=y|i\in S_l,k)$ as
\vspace*{-0.38cm}
\begin{equation}
P(Y=y|i\in S_l,k) = \sum_{x = C-y}^{|S_l|C} q_{l,x}^{(k)} P_{x,C-y},
\label{eq:Py_l_k}
\end{equation}  
where $P_{x,C-y}$ denotes the probability that $C-y$ chunks out of $x$ selected chunks from $S_l$ belong to the requested file $i$. Note that this probability is the same for any $i\in S_l$, since as mentioned before, all chunks are chosen uniformly at random. In fact, $P_{x,C-y}$ is derived as
\vspace*{-0.3cm}
\begin{equation}
P_{x,C-y}= \frac{ {C \choose C-y} {|S_l|C-C\choose x-(C-y)}}{{|S_l|C \choose x}},
\label{eq:P_x_y}
\end{equation}
where the nominator indicates the number of ways to choose $C-y$ chunks from the requested file and the rest $x-(C-y)$ chunks from other files. Also, the denominator shows the total number of ways of choosing $x$ chunks from $|S_l|C$ available ones. Since $P(Y=y|i\in S_l, k)$ is independent of $i$, we denote it by $P(Y=y|l, k)$ hereafter, where $P(Y=y|l,k)$ denotes the probability that $y$ chunks are transferred over the shared link given that a file from subset $S_l$ is requested. Using this fact and \eqref{eq:pd4}, the privacy degree can be written as
\vspace{-0.12in}
\begin{subequations}
\begin{align}
\label{eq:PD_SPC0} \Psi &= 1- \sum_{y=0}^{C} \max_{l,k} p_g^{(k)} \max_{i\in S_l} \Pr(Y=y|i,k) p_i,\\
\label{eq:PD_SPC1}&=1- \sum_{y=0}^{C} \max_{l,k} p_g^{(k)} p^*_l \Pr(Y=y|l,k)  
\end{align}
\end{subequations}
where $p^*_l$ indicates the popularity of the most popular file in subset $S_l$.

Since $\Omega$ in \eqref{eq:cc-SPC} and the arguments of maximization in $\Psi$ in \eqref{eq:PD_SPC1} are linear functions of probability distributions $O^{(k)}(\hat{\mathbf{z}})$, we use the same procedure as in Section \ref{sec:lp_opt} to change the optimization problem of SPC to an LP optimization as follows ($ 0\le y\le C,\mathbf{z}\in \hat{\mathcal{F}}, k\in\mathcal{K}$)
\begin{subequations}
\begin{align}
\mathcal{P}_3:&\min_{\Gamma_y,{O^{(k)}}(\hat{\mathbf{z}})} \hspace{0.1cm} \frac{1}{C}\sum^{K}_{k=1} \sum_{l=1}^{L}  \sum_{x=0}^{|S_l| C} p_g^{(k)} \; a_l \;  q^{(k)}_{l,x}\;(C-\frac{x}{|S_l|})\\
s.t. \hspace{0.3cm}&1-\sum_{y=0}^{C} \Gamma_y \ge \zeta,\label{eq:cns1}\\
&\Gamma_y \ge p^*_l \; p_g^{(k)} \sum_{x = C-y}^{|S_l|C} q^{(k)}_{l,x} P_{x,C-y}, \; \forall{k,l,y},\label{eq:cns2} \\
& \sum_{\hat{\mathbf{z}}\in \hat{\mathcal{F}}} O^{(k)}(\hat{\mathbf{z}})=1, ~~~~~~k\in\mathcal{K}\label{eq:cns3}\\
& 0 \leq O^{(k)}(\hat{\mathbf{z}})\leq 1, ~~~~~~\hat{\mathbf{z}}\in \hat{\mathcal{F}}, ~k\in\mathcal{K},
\label{eq:cns4}
\end{align}
\label{eq:SPC_LP}
\end{subequations} 
where $q^{(k)}_{l,x}$ is written in terms of $O^{(k)}(\hat{\mathbf{z}})$ as in \eqref{eq:q_l_x}. The above optimization can be solved through convex optimization tools, where its complexity depends on the number of subsets, i.e., $L$. As $L$ increases the performance becomes closer to the optimal JPC, where at $L=N$, each subset contains exactly one file and both methods result in the same optimal caching policies.% We demonstrate the validity of SPC approach in Section \ref{sec:evaluation_simulation}. 
Similar to the JPC approach the following lemma is proved.

\begin{lemma}
	\label{lemmas:spc-chunk}
	Suppose that there exists a constant value $h$ such that for any $l\in \{1,\cdots,L\}$, $1 \leq h \leq |S_l|$ and $hL \geq M$, then the performance of optimization $\mathcal{P}_3$ does not improve with $C$.
\end{lemma}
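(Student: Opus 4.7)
The plan is to mirror the two-step strategy of Lemma~\ref{lemmas:1} and show that $\mathcal{P}_3$ at chunk-size $C$ and $\mathcal{P}_3$ at $C=1$ share the same optimal value, by producing feasibility- and objective-preserving maps in both directions between their feasible sets. The hypothesis $h\le|S_l|$ and $hL\ge M$ guarantees that the $C=1$ feasible set $\hat{\mathcal{F}}|_{C=1}=\{(\tilde{z}_l):0\le\tilde{z}_l\le|S_l|,\sum_l\tilde{z}_l=M\}$ is non-empty, so a $C=1$ problem actually exists to compare against.

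For the easy ``lift'' direction $\Omega^*(\mathcal{P}_3;C)\le \Omega^*(\mathcal{P}_3;C=1)$, given a feasible point $(O'^{(k)},\gamma_0,\gamma_1)$ of $\mathcal{P}_3$ at $C=1$ I would place all of its mass on the scaled placements $\hat{z}_l=C\tilde{z}_l$, set $\Gamma_0=\gamma_0$, $\Gamma_C=\gamma_1$, and $\Gamma_y=0$ otherwise, and then verify by direct substitution into \eqref{eq:cc-SPC} and \eqref{eq:PD_SPC1} that both $\Omega$ and the privacy expression are unchanged. The key property making this work is the normalization $\sum_{y=0}^{C}P_{C\tilde{x},C-y}=1$ for each $\tilde{x}$, which lets the lifted $\Pr(Y=y\mid l,k)$ aggregate cleanly into the two $C=1$ bins.

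For the substantive ``projection'' direction $\Omega^*(\mathcal{P}_3;C)\ge\Omega^*(\mathcal{P}_3;C=1)$, given a feasible $(O^{(k)},\{\Gamma_y\}_{y=0}^{C})$ of $\mathcal{P}_3$ at chunk-size $C$, I would define the expected-fraction marginals $\alpha^{(k)}_l=(1/(|S_l|C))\sum_x q^{(k)}_{l,x}x$ and the averaged privacy variables $\gamma_0=(1/C)\sum_y(C-y)\Gamma_y$ and $\gamma_1=(1/C)\sum_y y\,\Gamma_y$. Because $\sum_l |S_l|\alpha^{(k)}_l=M$ (from $\sum_l E[\hat{Z}^{(k)}_l]=MC$) and $\alpha^{(k)}_l\in[0,1]$, a Carath\'eodory-style decomposition of the point $(|S_l|\alpha^{(k)}_l)_l$ in the convex hull of $\hat{\mathcal{F}}|_{C=1}$ yields a $C=1$ policy $O'^{(k)}$ whose marginals match. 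Using the two hypergeometric moment identities
\[
\sum_{y=0}^{C}\frac{C-y}{C}\,P_{x,C-y}=\frac{x}{|S_l|C},\qquad \sum_{y=0}^{C}\frac{y}{C}\,P_{x,C-y}=1-\frac{x}{|S_l|C},
\]
which follow from the hypergeometric mean $E[J\mid X=x]=x/|S_l|$, I would multiply the $C$-level inequality \eqref{eq:cns2} by $(C-y)/C$ and by $y/C$ and sum over $y$ to obtain the $C=1$ privacy inequalities $\gamma_0\ge p_g^{(k)}p^*_l\alpha^{(k)}_l$ and $\gamma_1\ge p_g^{(k)}p^*_l(1-\alpha^{(k)}_l)$. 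Combined with the identity $\gamma_0+\gamma_1=\sum_y\Gamma_y$, this reproduces \eqref{eq:cns1} in the projected problem, and a direct substitution into \eqref{eq:cc-SPC} at $C=1$ verifies that $\Omega$ is unchanged.

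The main obstacle is the projection step. The two moment identities are a routine hypergeometric computation once $P_{x,j}$ is recognized as the pmf of a hypergeometric with $C$ successes drawn from a population of $|S_l|C$ in a sample of size $x$. The more delicate ingredient is the realizability claim: one must argue that the chosen $(|S_l|\alpha^{(k)}_l)_l$ is not merely a convex-combination target in $\mathbb{R}^L$ but is actually attainable as marginals of a probability measure supported on the integer polytope $\hat{\mathcal{F}}|_{C=1}$. Non-emptiness of that polytope is precisely what the hypothesis $hL\ge M$ with $h\le|S_l|$ supplies (since it exhibits an explicit integer point with $\tilde{z}_l\le h$), after which a vertex decomposition of the transportation-style polytope closes the gap.
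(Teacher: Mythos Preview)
Your projection step is exactly the paper's argument: define $\gamma_0=\frac{1}{C}\sum_y(C-y)\Gamma_y$, $\gamma_1=\frac{1}{C}\sum_y y\Gamma_y$, and push the constraints \eqref{eq:cns2} through the hypergeometric mean identities to obtain the $C=1$ privacy inequalities. Where you differ is in \emph{realizing} the target marginals $(|S_l|\alpha^{(k)}_l)_l$ by a $C=1$ policy. The paper (Lemma~4 in Appendix~\ref{appendix1}) builds an explicit policy that caches either $0$ or exactly $h$ files from each subset and then applies the DPC filling strategy to $L$ ``superfiles'' with cache size $M/h$; the hypotheses $h\le|S_l|$ and $hL\ge M$ enter precisely there (the first so that $h$ files fit in every subset, the second so that $M/h\le L$ and DPC applies). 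Your Carath\'eodory route on the box-plus-one-equality polytope is cleaner and in fact does not use the hypothesis at all---that polytope is non-empty whenever $M\le N$ and is integral by total unimodularity---so your sentence attributing non-emptiness ``precisely'' to the hypothesis misidentifies its role.

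Your lift direction, however, fails as written. Scaling a $C=1$ placement $(\tilde z_l)$ to $(C\tilde z_l)$ does not concentrate $Y$ on $\{0,C\}$: under the uniform-without-replacement chunk selection inside $S_l$, the number of cached chunks of the requested file is hypergeometric with parameters $(|S_l|C,\,C,\,C\tilde z_l)$, which puts positive mass on intermediate values whenever $0<\tilde z_l<|S_l|$. Setting $\Gamma_y=0$ for $0<y<C$ then violates \eqref{eq:cns2}, and the normalization $\sum_y P_{C\tilde x,C-y}=1$ you cite is merely the statement that $P_{x,\cdot}$ is a pmf; it does not let you ``aggregate into two bins'' because the privacy expression \eqref{eq:PD_SPC1} has a $\max_{l,k}$ inside the sum over $y$. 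This is not fatal: the lemma only claims that performance does not improve with $C$, i.e.\ $\Omega^*(C)\ge\Omega^*(1)$, which is your projection direction, and the paper likewise proves only that inequality.
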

\begin{proof}
Please see Appendix \ref{appendix1}.
\end{proof}
% RED !
Examples of the above lemma are when $L\ge M$ and $h=1$, and when $L < M$ and the sizes of all subsets are the same, i.e., $\frac{N}{L}$. In the latter case, it is enough to choose $h\geq \frac{M}{L}$, which is possible since  $N>M$.
Although the value of $C$ does not affect the optimal communication cost, but hit-ratio increases as a result of increasing $C$ since the average number of files, cached partially from each subset, increases. 
 
The hit-ratio constraint can also be added to the optimization $\mathcal{P}_3$, as $\sum_{k}p_g^{(k)} h^{(k)} \geq \beta$, where $h^{(k)}$ is derived as 
\vspace*{-0.2cm}
\begin{equation}
\label{eq:hit_SPC}
h^{(k)}  = \sum_{l} a_l \sum_{x = 0}^{|S_l|C} q_{l,x}^{(k)} \bigg( 1- \frac{{|S_l|C-C \choose x}}{{|S_l|C \choose x}}\bigg),
\end{equation}
where last multiplicative term indicates the probability that at least one chunk from $x$ chosen chunks of $S_l$ is selected from the requested file.

\section{Evaluation and Numerical Results}
\label{sec:evaluation_simulation}
In this section, we evaluate the performance of JPC, DPC and SPC approaches under different network parameters. Furthermore, we compare the performance of the DPC and SPC with the JPC approach without and with having constraint on average hit-ratio probability. We use the CVX toolbox of MATLAb simulator for solving LP optimizations \eqref{eq:LP_General}, \eqref{eq:opt_proposed_policy2} and \eqref{eq:SPC_LP}. Moreover, we represent simulation results in order to validate our analytical approach. Unless otherwise stated, the system parameters are considered to be $N=5$, $K=2$, $C=10$ and $M=2$, with popularity set $\{0.5,0.18,0.12,0.11,0.09\}$ and request generation probability set $\{0.7,0.3\}$. 

In the following, we also present a random caching policy, called random dummy approach, as a benchmark, to compare the performance of the proposed policies. Also, we derive the interval of achievable privacy degrees in RDA, JPC, DPC, and SPC methods.

\subsection{Random Dummy Approach (RDA)} \label{subsec:dummy}
As noted before, dummy-based approaches are the baseline solutions to privacy-preserving problems\cite{lu2008pad,niu2014privacy}. In the \emph{basic dummy} approach, the $M$ most popular files are cached in each cache with probability one. However, $C$ chunks are sent in response to every request, regardless that the requested file is cached or not. Hence, the adversary does not acquire any information from the transmitted chunks, leading to maximum privacy degree in the network. In order to provide different degrees of privacy in the dummy approach, we consider a modified version, namely, the \emph{Random Dummy} Approach (RDA). Similar to the basic dummy approach, in RDA, the most popular files are cached in all caches. However, each request associated to a cached file is responded with $C$ dummy chunks with probability $s$ and zero chunks with probability $1-s$. Moreover, the requests associated to un-cached files are responded with $C$ chunks, as usual. 
Then, using \eqref{eq:pd4}, at a given $s$, the privacy degree of RDA is calculated as $1-p^{\max}_g\big((1-s)p_1+\max\{p_{M+1},s p_1\}\big)$, where $p^{\max}_g = \max_k p^{(k)}_g$, and $p_{M+1}$ represents the popularity of $\{M+1\}$-th file, which is the most popular un-cached file. In the following, we derive the interval of achievable privacy degrees by RDA, JPC, DPC,and SPC.

\emph{RDA, JPC, and DPC}: As mentioned before, in order to minimize the privacy degree, first of all, the caching policy should be deterministic so that the adversary has no ambiguity on which file has been cached. In this case, using \eqref{eq:pd4}, the privacy degree is derived as $1-p_g^{\max} p^*_{cached}-p_g^{\max} p^*_{uncached}$, where $p^*_{cached}$ and  $p^*_{uncached}$ are the popularities of the most popular cached and un-cached files, respectively. Then, in order to minimize the privacy degree, it is enough to maximize $p^*_{cached}+p^*_{uncached}$, which happens when the $M$ most popular files are cached with probability one in every cache. This results in minimum privacy degree to be equal to $\Psi^{\min}=1-p_g^{\max} p_1-p_g^{\max} p_{M+1}$. 
Policies DPC and SPC can choose the most popular placement with probability one, and thus, achieve $\Psi^{\min}$. To this end, it is enough to set $q_i^{(k)}=1$ for $i\in\{1,2,...,M\}$ and zero otherwise in DPC, and in SPC, $P^{(k)}(\mathbf{z})=1$ for $\mathbf{z}=(1,2,\cdots,M)$. In RDA, the most popular files are already cached, however, through setting $s=0$, the requests for cached files are not responded with dummy files at all, resulting in the minimum privacy degree $\Psi^{\min}$.  

On the other hand, maximum privacy degree is achieved whenever the adversary cannot achieve any information from the number of chunks transferred over the shared link. In this case, the adversary chooses the most popular file as the requested file and $\argmax_k p^{(k)}_g$ as the requesting cache, leading to a maximum privacy degree equal to $\Psi^{\max}=1-p^{\max}_g p_1$. $\Psi^{\max}$ is achieved in RDA and DPC through setting $s=1$ and $q^{(k)}_i=\frac{M}{N}, ~\forall i,k$, respectively, where the latter is written considering the constraint $\sum_i q_i^{(k)}=M$ in DPC. In JPC, any probability distribution $P^{(k)}(\mathbf{z})$ that results in the same caching probability of files leads to the maximum privacy degree $\Psi^{\max}$, e.g., the probability distribution that chooses any combination $M$ files out of $N$ with the same probability $\frac{1}{{N \choose M}}$. Thus, the achievable privacy degree by JPC, DPC, and RDA is equal to $[1-p^{\max}_g(p_1+p_{M+1}),1-p^{\max}_g p_1]$.

\emph{SPC}: The following lemma introduces the achievable privacy degrees by SPC. 
\begin{lemma}
Given subsets $\{S_1,S_2,...,S_L\}$, assume a placement $\hat{\mathbf{z}}$ such that $\hat{z}_i = |S_i| C$ for $i\in\{1,...,m-1\}$,  $0 < z_m \leq |S_m| C$, and $z_i=0$ for $i > m$. In fact, $\mathbf{z}$ is the placement which chooses all chunks of the first $m-1$ subsets completely, and the remaining chunks from subset $S_m$ such that $\sum_{l=1}^m \hat{z}_l=MC$. Then, the minimum and maximum privacy degree of SPC, denoted by $\Psi^{\min}_{\text{SPC}}$ and $\Psi^{\max}_{\text{SPC}}$, respectively, are derived as
\begin{equation}
\begin{aligned}
\Psi^{\min}_{\text{SPC}}=&1-p^*_1 p_g^{max}-\max\Big\{\frac{{|S_m|C-C \choose z_m}}{{|S_m|C \choose z_m}} p^*_M p_g^{max}, p^*_{M+1} p_g^{max}\Big\}\\&-p^*_M p_g^{max} \bigg(1- \frac{{|S_m|C-C \choose z_m-C} + {|S_m|C-C \choose z_m}}{{|S_m|C \choose z_m}}\bigg), \\
\Psi^{\max}_{\text{SPC}}=&1-p^*_1 p_g^{max}.
\end{aligned}
\label{eq:pd_min_max_spc}
\end{equation}
\end{lemma}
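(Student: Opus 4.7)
The plan is to read both bounds directly off the SPC privacy expression
\[
\Psi \;=\; 1 - \sum_{y=0}^{C} \max_{l,k}\; p_g^{(k)}\, p^*_l\, P(Y = y \mid l, k)
\]
of \eqref{eq:PD_SPC1}. For each extremum I would (i) evaluate $\Psi$ at a specific SPC policy and then (ii) certify that no other policy beats it. For the minimum the candidate is the greedy placement described in the lemma statement; for the maximum it is any policy forcing the outer argmax to land at $(l,k) = (1, k^*)$ with $k^* \in \argmax_k p_g^{(k)}$ for every $y$.

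To compute $\Psi^{\min}_{\text{SPC}}$ I would first specialize $P(Y = y \mid l, k)$ under the greedy placement. For $l < m$ the subset is cached in full, so $P(Y = 0 \mid l, k) = 1$ and the other masses vanish. For $l > m$ nothing is cached, so $P(Y = C \mid l, k) = 1$. For $l = m$, $P(Y = y \mid m, k) = P_{z_m,\, C-y}$ with $P_{x, C-y}$ the hypergeometric expression of \eqref{eq:P_x_y}. Splitting the sum over $y$ into three ranges: at $y = 0$ the max is attained at $l = 1$ and contributes $p^*_1\, p_g^{\max}$; for $1 \le y \le C - 1$ only $l = m$ contributes, giving $p^*_m\, p_g^{\max} \sum_{y=1}^{C-1} P_{z_m, C-y}$, which by $\sum_{y=0}^{C} P_{z_m, C-y} = 1$ equals $p^*_m\, p_g^{\max}\, (1 - P_{z_m, 0} - P_{z_m, C})$; at $y = C$ the competitors are $l = m$ with weight $p^*_m\, p_g^{\max}\, P_{z_m, 0}$ and $l = m+1$ with weight $p^*_{m+1}\, p_g^{\max}$, producing the middle term of the stated expression. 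Substituting $P_{z_m, 0} = \binom{|S_m| C - C}{z_m} / \binom{|S_m| C}{z_m}$ and $P_{z_m, C} = \binom{|S_m| C - C}{z_m - C} / \binom{|S_m| C}{z_m}$ and collecting the three pieces reproduces the claimed $\Psi^{\min}_{\text{SPC}}$.

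For $\Psi^{\max}_{\text{SPC}}$ I would restrict the inner max to $(l, k) = (1, k^*)$ and use $\sum_y P(Y = y \mid 1, k^*) = 1$ to obtain
\[
\sum_{y=0}^{C} \max_{l,k} p_g^{(k)}\, p^*_l\, P(Y = y \mid l, k) \;\ge\; p_g^{\max}\, p^*_1,
\]
which yields $\Psi \le 1 - p_g^{\max}\, p^*_1$. Achievability follows by exhibiting any sufficiently symmetric SPC policy on $\hat{\mathcal{F}}$ under which the pointwise max is attained at $(1, k^*)$ for every $y$, which is possible since $p^*_1$ is the largest of the $p^*_l$.

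Finally, to certify that no policy undercuts the greedy placement on the minimum side, I observe that $P(Y = y \mid l, k)$ is linear in $O^{(k)}(\hat{\mathbf{z}})$ through \eqref{eq:Py_l_k} and \eqref{eq:q_l_x}. Hence each inner max is convex in the policy, $-\Psi$ is convex, and its minimum is attained at an extreme point of the feasible set, i.e.\ a single deterministic placement $\hat{\mathbf{z}} \in \hat{\mathcal{F}}$. A chunk-swap argument then compares any deterministic placement to the greedy one: transferring a chunk from a subset $l' > m$ down to a subset $l \le m$ weakly increases the mass at $y = 0$ and decreases it at $y > 0$ for the less popular subset $l'$, which combined with the ordering $p^*_1 \ge p^*_2 \ge \cdots$ can only shrink $\Psi$. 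I expect this exchange argument to be the main technical obstacle, since one must jointly track how all three pieces of the sum move and handle the boundary case in which the $y = C$ argmax flips between $l = m$ and $l = m+1$.
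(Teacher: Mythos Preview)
Your computational derivation of $\Psi^{\min}_{\text{SPC}}$ and $\Psi^{\max}_{\text{SPC}}$ is essentially identical to the paper's: the same three-way split $y=0$, $1\le y\le C-1$, $y=C$, the same case analysis of $P(Y=y\mid l,k)$ under the greedy placement, and the same use of $\sum_y P_{z_m,C-y}=1$ to collapse the middle range. The paper's Appendix~\ref{appendix2} does exactly this and nothing more.

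Where you differ is in rigor, and in your favor. The paper simply \emph{asserts} that the minimum is achieved at the greedy placement (``The minimum privacy degree in the SPC is achieved when SPC tries to cache the most popular files as much as possible'') and that the maximum is achieved at the uniform policy, without any argument. Your convexity-plus-exchange plan for the minimum and your lower-bound-plus-achievability plan for the maximum are genuine additions that the paper omits. One small wording fix: you want to say that $\Psi$ is concave (as $1$ minus a sum of pointwise maxima of linear functions), so \emph{minimizing} it over the product of simplices is attained at an extreme point; writing ``$-\Psi$ is convex, and its minimum is attained at an extreme point'' muddles which function is being minimized. The chunk-swap step you flag as the main obstacle is indeed where the real work would lie, and the paper does not supply it either.
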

\begin{proof}
Please see Appendix \ref{appendix2}.
\end{proof}
It is seen from the above lemma that the minimum privacy degree in SPC is dependent on $C$.
\subsection{Numerical Results}  
\label{sec:genral_opt}
In Figure \ref{fig:DPCvsJPC_NUM}, the optimal communication costs of DPC and JPC approaches, derived from \eqref{eq:LP_General} and \eqref{eq:opt_proposed_policy2}, respectively, are plotted versus the privacy degree threshold, $\zeta$. In the case of the JPC approach, the results are plotted in the cases of $C=1$ and $C=10$. As can be seen, the number of chunks does not affect the optimal communication cost in the JPC approach, as proved in Lemma \ref{lemmas:1}. Moreover, DPC and JPC approaches result in the same optimal performance. Also, it is observed that optimum communication cost increases with $\zeta$. This is due to the fact that in order to provide higher privacy degree, the randomness in caching strategy increases, which leads to caching of more popular files less probable.

\begin{figure}[t!]
	\centering
	\includegraphics[width=\linewidth]{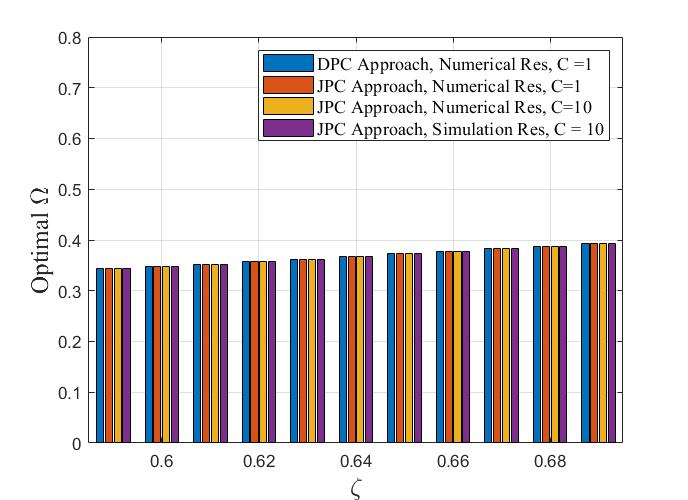}%DPC_JPC_1_10.eps}
	\caption{Numerical and simulation results of optimal $\Omega$ vs. privacy degree threshold, $\zeta$ in two approaches: DPC and JPC. }
	\label{fig:DPCvsJPC_NUM}
\end{figure}

In Fig.~\ref{fig:JPC_DPC_RDA}, the communication cost of RDA, and optimal communication costs of JPC and DPC are plotted versus $\zeta$. In order to derive the communication cost of RDA at a given value of $\zeta$, first we derive the parameter $s$ of RDA approach which results in privacy degree $\zeta$, i.e., we set $1-p^{\max}_g\big((1-s)p_1+\max\{p_{M+1},s\; p_1\}\big)=\zeta$ (see Section \ref{subsec:dummy}). Then, the communication cost of RDA is calculated as $s\sum_{i=1}^{M} p_i+\sum_{i=M+1}^{N} p_i$. 
As demonstrated in Figure \ref{fig:JPC_DPC_RDA}, all three approaches have the same communication cost at the minimum privacy degree, i.e., $\zeta=0.56$. Here, JPC and DPC cache the most popular contents, and RDA parameter $s$ is equal to zero. However, as $\zeta$ increases, the optimum communication cost increases more rapidly in the RDA approach. This is because in RDA, in order to preserve the increased privacy degree $\zeta$, we need to increase $s$, which simultaneously increases the average number of files transferred in response to all cached files. However, DPC and JPC approaches have more freedom to treat files distinctly by choosing different placements with different probabilities, thus yielding lower optimal communication cost than RDA.  Moreover, according to Figure \ref{fig:JPC_DPC_RDA}, JPC (DPC) outperforms RDA, up to 21\%.

\begin{figure} [t!]
	\centering
	\includegraphics[width=\linewidth]{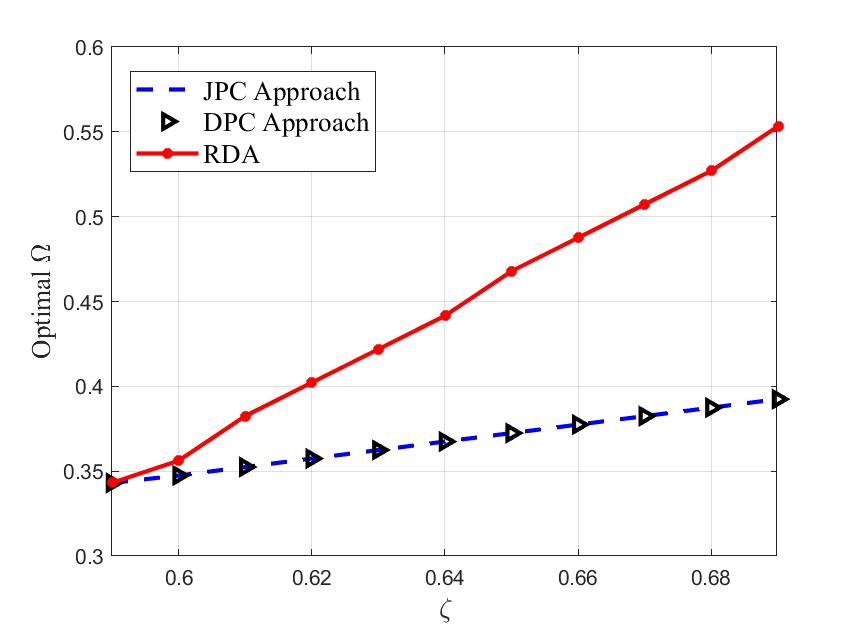}
	\caption{Optimal $\Omega$ versus privacy degree threshold, $\zeta$ when $\alpha = 1$ and $C = 1$.}
	\label{fig:JPC_DPC_RDA}
\end{figure}

Next, we investigate the impact of the number of chunks, i.e., $C$, on the hit-ratio-constrained JPC. In particular, when optimizing the hit-ratio-constrained JPC, we increase the value of $C$ until there exists a feasible solution to the optimization. We refer to the minimum value of $C$ that makes the optimization feasible as $C_{\min}$. In Figure \ref{fig:JPC_HIT}a and \ref{fig:JPC_HIT}b, $C_{\min}$ and the corresponding optimal communication cost are plotted versus the hit-ratio threshold $\beta$, respectively, assuming $N = 12$ and $M = 3$. The results are depicted for the cases that the file popularities are generated according to the Zipf distribution with parameter $\alpha =1$ and $\alpha=1.5$, and the corresponding privacy degree thresholds, $\zeta$, are considered to be $0.77$ and $0.61$, respectively. 

\begin{figure} [t!]
	\centering
	\includegraphics[width=\linewidth]{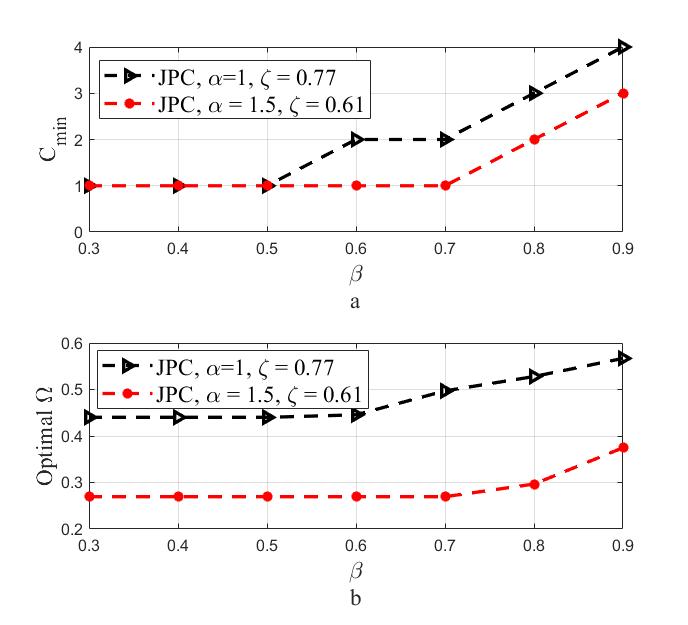}
	\caption{$a)$ $C_min$ and $b)$ optimal $\Omega$ versus hit rate threshold $(\beta)$, where $\zeta = 0.61$ and $0.77$.}
	\label{fig:JPC_HIT}
\end{figure}  
As can be observed in Figure \ref{fig:JPC_HIT}a, $C_{\min}$ increases with $\beta$ since in order to increase the number of files cached partially, we need to cache smaller parts of the files which is possible through increasing $C$. Meanwhile, the corresponding communication cost increases as observed in Figure \ref{fig:JPC_HIT}b, since by increasing the average hit ratio, smaller parts of the popular files are cached on average. 

\begin{figure}[t!]
	\centering
	\includegraphics[width=0.9\linewidth]{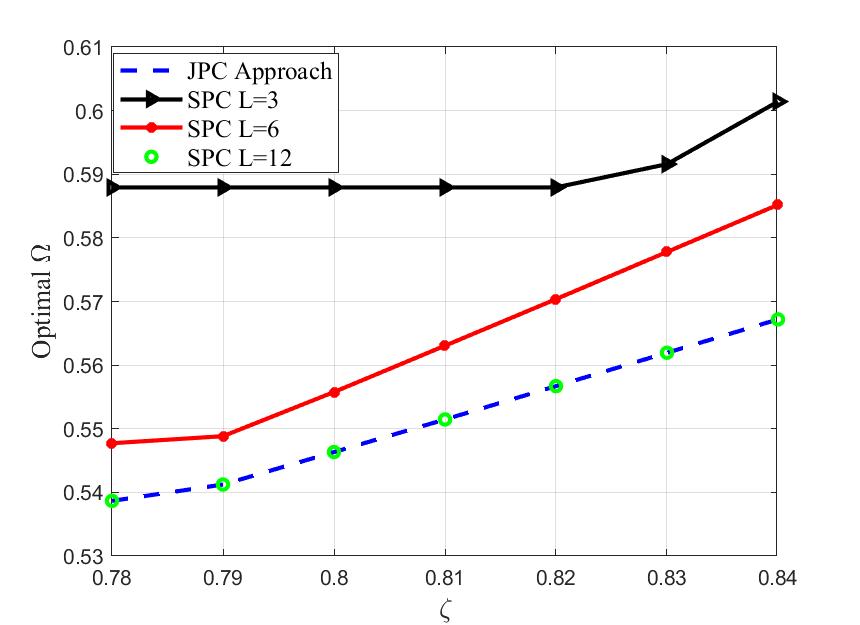}%
	\caption{Optimal $\Omega$ versus privacy degree $\zeta$ in the SPC approach, where $N=12$, $M=3$, $\alpha=0.65$.}
	\label{fig:SPC}
\end{figure}

In Figure \ref{fig:SPC}, we compare the performance of the proposed SPC approach, at different values of $L$, i.e., the number of subsets, against optimal JPC, considering Zipf parameter $\alpha=0.65$, $N=12$, and $M = 3$. Moreover, all subsets are assumed to have the same size. Also, note that the files are divided in subsets according to Section \ref{sec:SPC}, i.e., the subsets are filled with files in a descending order of popularity. As can be seen in Figure \ref{fig:SPC}, the performance of optimal SPC approach becomes closer to the optimal JPC as $L$ increase, where at $L = 12$, they have the same performance. Moreover, it is observed that with changing $L$ from 12 to 3, the optimal communication cost increases at least $6\%$ and at most $9\%$ However, the number of feasible placements decreases from 220 to 10, leading to less complex optimization problem. Another interesting point is that as we decrease $L$, the minimum achievable privacy degree increases, leading to more secure probabilistic caching strategy. As such, the minimum privacy degree increases $5\%$ at $L=3$ compared to $L=12$.

\begin{figure} [t!]
	\centering
	\includegraphics[width=\linewidth]{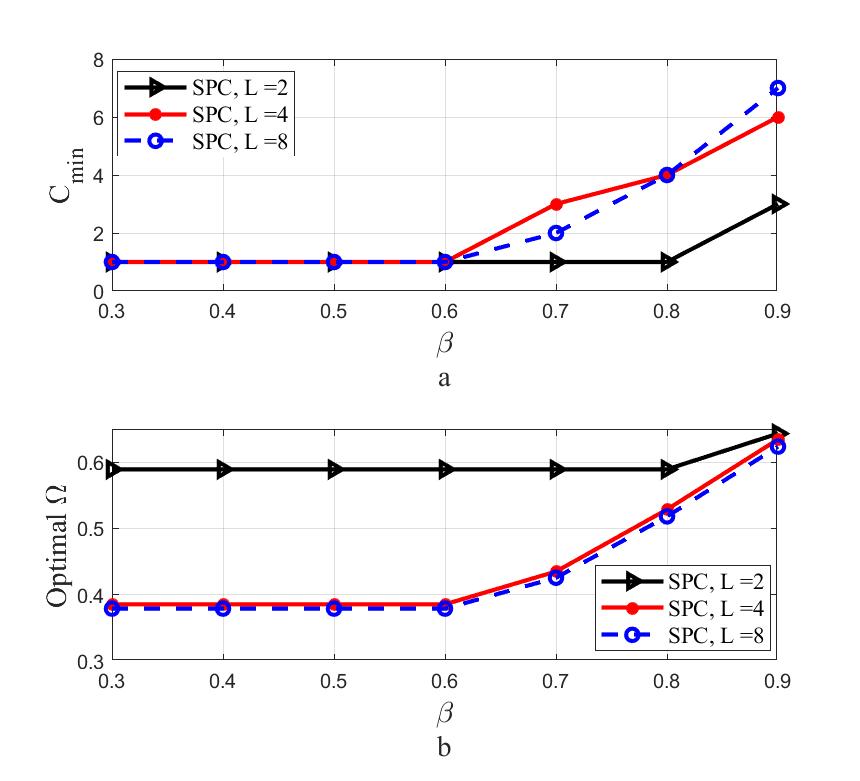}
	\caption{$a)$ $C_{\min}$ and b) Optimal $\Omega$, versus hit rate threshold ($\beta$), where $\zeta = 0.81$ and $\alpha=1$.}
	\label{fig:SPC_HIT}
\end{figure}

In Figures \ref{fig:SPC_HIT}(a) and \ref{fig:SPC_HIT}(b), $C_{\min}$, i.e., minimum value of $C$ at which hit-ratio-constrained SPC optimization is feasible, and its corresponding optimal communication cost are plotted versus hit-ratio threshold $\beta$, respectively. As observed in Figures \ref{fig:SPC_HIT}(a), $C_{\min}$ increases with $\beta$ since in order to support a higher hit-ratio, more files should be cached partially, which requires caching smaller parts of the files through increasing the value of $C$. Also, it is observed that when the number of subsets is small, generally $C_{\min}$ is smaller, e.g., at $L=3$, $C_{\min}$ increases from one at hit-ratio equal to $0.8$, however, at $L=6, 12$, such a value is equal to $0.6$. This is because, in the SPC approach, the chunks in each subset are chosen equally probably. Thus, the adversary gains no knowledge about the files within each subset and chooses the most popular file in each subset as the requested file. Consequently, when the number of files within each subset increases, or equivalently $L$ becomes smaller, the error probability of the adversary, i.e., the probability that one file other than the most popular file is requested, increases. Moreover, it is observed from \ref{fig:SPC_HIT}(b) that the optimal communication cost increases with $\beta$. This is because the space of the cache is dedicated more to less popular files to increase the hit-ratio, which in turn increases the communication cost.

\begin{figure} [t!]
	\centering
	\includegraphics[width=\linewidth]{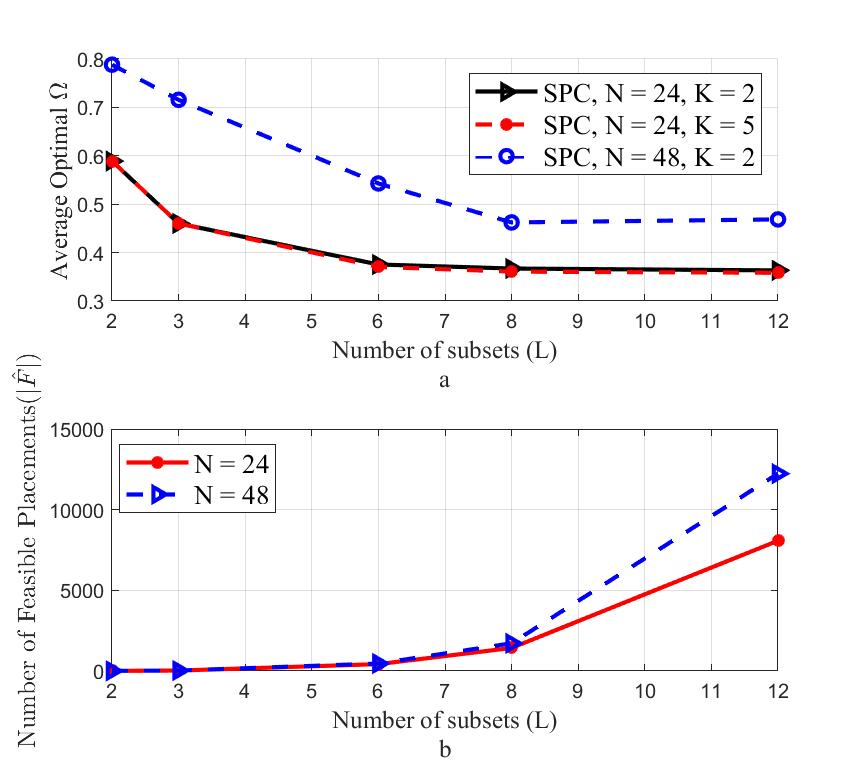}
	\caption{$a)$ Average optimal $\Omega$ and $b)$ The number of feasible placements versus number of subsets$(L)$ points, where $\alpha = 1$.}
	\label{fig:SPC_complexity}
\end{figure}

In Figure \ref{fig:SPC_complexity}(a), the average of optimal $\Omega$ over the interval of achievable privacy degrees is plotted versus the number of subsets $L$. As can be seen, the average optimal $\Omega$ decreases with $L$ since the performance of SPC becomes closer to the optimal JPC. Also, the gradient of the curves is decreasing, i.e., the difference between average optimal $\Omega$ in smaller size  $L$, e.g., $2$ and $3$ is more significant than in larger sizes of  $L$ such as $8$ and $10$. This is important because the profit ratio is not worth the computation cost despite the minimization of optimal omega in larger $L$ sizes. As depicted in Figure \ref{fig:SPC_complexity}(b), the number of feasible placements, i.e., $|\hat{\mathcal{F}}|$  grows exponentially as the $L$ size grows.

\section{Conclusion}
\label{sec:conclusion}
Decreasing the communication cost in edge networks results in caching the most popular files in edge caches. However, it degrades the users' privacy since it provides the adversary with knowledge about users' interests in different files. In this paper, we proposed JPC, based on joint chunk placement at different caches, to preserve privacy in the network while minimizing the communication cost. We showed that the corresponding optimization problem is LP. However, the LP optimization requires deriving all feasible chunk placements, which is cumbersome when the system parameters are large. To overcome this, we propose a scalable-JPC approach (SPC) in which the files are grouped into small subsets, and then the placements are done based on the subsets, i.e., the number of chunks stored from a subset. Numerical results revealed that the JPC approach outperforms SPC, DPC, and random dummy approaches.

% if have a single appendix:
\appendices
%\appendix
\section{Proof of Lemma \ref{lemmas:spc-chunk}}
\label{appendix1}
Assume that in the cases of $C=1$ and $C>1$, the SPC policy indicates the probability distributions $O^{(k)}(\hat{\mathbf{z}})$ and $\check{O}^{(k)}(\check{\mathbf{z}})$, respectively, where $\hat{\mathbf{z}}\in \hat{\mathcal{F}}$ (see \eqref{eq:F_hat}) and $\check{\mathbf{z}}=(\check{z}_1,\check{z}_2,\cdots,\check{z}_L)$ represents a feasible placement in the case of $C=1$. In fact, $\check{z}_l$ indicates the number of files chosen from subsets $S_l$. Thus, $\check{\mathbf{z}}\in \check{\mathcal{F}} = \{\check{\mathbf{z}}|\check{z}_l \in \{0,1,\cdots,|S_l|\}, 1 \leq l \leq L, \sum_{l=1}^L \check{z}_l = M \}$. In the case that $C>1$, we refer to the optimization $\mathcal{P}_3$ in \eqref{eq:SPC_LP} as $\mathcal{P}_{3c}$, and rewrite it as:
\vspace{-0.28cm}
\begin{equation}
\begin{aligned}
\mathcal{P}_{3c}:&\min_{\{\Gamma_y\}_{y=0}^C,{O^{(k)}}(\hat{\mathbf{z}})} \hspace{0.1cm} \sum^{K}_{k=1} \sum_{l=1}^{L}  p_g^{(k)} \; a_l \;  E[Y^{(k)}_l]\\
s.t. \hspace{0.3cm}&1-\sum_{y=0}^{C} \Gamma_y \ge \zeta,\\
&\hspace{0.3cm}\Gamma_y \ge p^*_l \; p_g^{(k)} P(Y=y|l,k), \; \forall{k,l,y}, 
\label{eq:SPC_C}
\end{aligned}
\end{equation} 
where from \eqref{eq:SPC_LP}, $E[Y^{(k)}_l]=\sum_{x=0}^{|S_l| C} q^{(k)}_{l,x}(1-\frac{x}{|S_l|C})$ is the average number of transferred files under policy $O^{(k)}(\hat{\mathbf{z}})$, given that cache $k$ requests one file from $S_l$. Moreover, $P(Y=y|l,k)$ in the second constraint of \eqref{eq:SPC_C} is derived from \eqref{eq:Py_l_k} and \eqref{eq:P_x_y}. Moreover, in the case of $C=1$, we rewrite $\mathcal{P}_3$ as:
\begin{equation}
\begin{aligned}
\mathcal{P}_{31}:\min_{\check{\Gamma}_0,\check{\Gamma}_1,{\check{O}^{(k)}}({\check{\mathbf{z}}})} &\hspace{0.1cm} \sum^{K}_{k=1} \sum_{l=1}^{L}  p_g^{(k)} \; a_l \;  E[\check{Y}^{(k)}_l]\\
s.t. \hspace{0.3cm}&1-\check{\Gamma}_0-\check{\Gamma}_1 \ge \zeta,\\
&\hspace{0.3cm}\check{\Gamma}_0 \ge p^*_l \; p_g^{(k)} P(\check{Y}=0|l,k), \; \forall{k,l}, \\
&\hspace{0.3cm}\check{\Gamma}_1 \ge p^*_l \; p_g^{(k)} P(\check{Y}=1|l,k), \; \forall{k,l}, 
\label{eq:SPC_1}
\end{aligned}
\end{equation} 
where $\check{Y}$ and $E[\check{Y}^{(k)}_l]$ are the corresponding values of ${Y}$ and $E[{Y}^{(k)}_l]$ in the case of $C=1$. $P(\check{Y}=y|l,k)$ and $E[\check{Y}^{(k)}_l]$ are derived similar to $P({Y}=y|l,k)$ and $E[{Y}^{(k)}_l]$ through setting $C=1$.

Now, let $\{o^{(k)}(\hat{\mathbf{z}}),\{\gamma_y\}^C_{y=0}\}$ be a feasible solution of $\mathcal{P}_{3c}$, resulting in $E[Y^{(k)}_l]=e^{(k)}_l$. Then, through the following lemmas, we show that the optimal communication cost of $\mathcal{P}_{3c}$ is less than or equal to that of $\mathcal{P}_{31}$ .
\begin{lemma}
If there exists a caching policy $\check{o}^{(k)}(\check{\mathbf{z}})$ over $\check{\mathcal{F}}$, under which $E[\check{Y}^{(k)}_l]=e^{(k)}_l$, then, $\{\check{o}^{(k)}(\check{\mathbf{z}}),\check{\gamma}_0,\check{\gamma}_1\}$, with $\check{\gamma}_0$ and $\check{\gamma}_1$ defined as $\check{\gamma}_0=\frac{1}{C}\sum^C_{y=0}(C-y)\gamma_y$ and $\check{\gamma}_1=\frac{1}{C}\sum^C_{y=0}y\gamma_y$ is a feasible solution of $\mathcal{P}_{31}$, and results in the same communication cost as $o^{(k)}(\hat{\mathbf{z}})$ in optimization $\mathcal{P}_{3c}$.
\end{lemma}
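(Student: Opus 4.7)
The plan is to verify the three feasibility conditions of $\mathcal{P}_{31}$ in turn and then observe that the objective value coincides. The key auxiliary identity, which I would establish once and reuse, is that the quantity $E[Y^{(k)}_l]$ appearing in the appendix admits the probabilistic representation
\begin{equation*}
E[Y^{(k)}_l]=\frac{1}{C}\sum_{y=0}^{C} y\,P(Y=y|l,k),\qquad 1-E[Y^{(k)}_l]=\frac{1}{C}\sum_{y=0}^{C}(C-y)\,P(Y=y|l,k),
\end{equation*}
i.e.\ $E[Y^{(k)}_l]$ is literally the expected fraction of transferred chunks. This follows either directly from the physical meaning (each cached chunk of subset $S_l$ belongs to the requested file with probability $1/|S_l|$, independent of how many are chosen) or, if one wants to start from the closed-form $P(Y=y|l,k)=\sum_x q^{(k)}_{l,x}P_{x,C-y}$, by collapsing $\sum_y yP_{x,C-y}$ using Vandermonde's identity to $C(1-x/(|S_l|C))$.

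Given this identity, I would check the first constraint of $\mathcal{P}_{31}$ by direct algebra: since $(C-y)/C$ and $y/C$ form a convex combination with sum $1$, we get
\begin{equation*}
\check{\gamma}_0+\check{\gamma}_1=\frac{1}{C}\sum_{y=0}^{C}(C-y)\gamma_y+\frac{1}{C}\sum_{y=0}^{C}y\gamma_y=\sum_{y=0}^{C}\gamma_y,
\end{equation*}
so $1-\check{\gamma}_0-\check{\gamma}_1=1-\sum_y\gamma_y\ge\zeta$ transfers from the corresponding constraint of $\mathcal{P}_{3c}$.

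For the two remaining constraints I would combine the inequalities $\gamma_y\ge p^*_l p_g^{(k)}P(Y=y|l,k)$ with the nonnegative weights $(C-y)/C$ and $y/C$ and sum over $y$. The auxiliary identity then produces
\begin{equation*}
\check{\gamma}_0\ge p^*_l p_g^{(k)}\bigl(1-E[Y^{(k)}_l]\bigr)=p^*_l p_g^{(k)}(1-e^{(k)}_l),\qquad
\check{\gamma}_1\ge p^*_l p_g^{(k)}E[Y^{(k)}_l]=p^*_l p_g^{(k)}e^{(k)}_l.
\end{equation*}
Because $\check{Y}$ is Bernoulli-valued in the $C=1$ setting, one has $P(\check{Y}=1|l,k)=E[\check{Y}^{(k)}_l]=e^{(k)}_l$ and $P(\check{Y}=0|l,k)=1-e^{(k)}_l$ by hypothesis on $\check{o}^{(k)}$, so the above two inequalities are exactly the second and third constraints of $\mathcal{P}_{31}$. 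Since $\check{o}^{(k)}$ is assumed to be a probability distribution on $\check{\mathcal{F}}$, the normalization and bound constraints hold automatically.

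Finally, the objective of $\mathcal{P}_{3c}$ evaluated at $o^{(k)}$ equals $\sum_k\sum_l p_g^{(k)}a_l e^{(k)}_l$, and the objective of $\mathcal{P}_{31}$ evaluated at $\check{o}^{(k)}$ equals $\sum_k\sum_l p_g^{(k)}a_l E[\check{Y}^{(k)}_l]=\sum_k\sum_l p_g^{(k)}a_l e^{(k)}_l$, so the two communication costs agree. The main obstacle is really only the auxiliary identity above; once it is available, every remaining step reduces to taking the convex combination of $\gamma_y$'s with weights $(C-y)/C$ and $y/C$, which is precisely how $\check{\gamma}_0$ and $\check{\gamma}_1$ were defined.
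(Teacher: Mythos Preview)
Your proposal is correct and follows essentially the same route as the paper: both take the convex combination of the $\gamma_y$ inequalities with weights $(C-y)/C$ and $y/C$, invoke the identity $\frac{1}{C}\sum_y y\,P(Y=y|l,k)=E[Y^{(k)}_l]=e^{(k)}_l$, observe that $\check{\gamma}_0+\check{\gamma}_1=\sum_y\gamma_y$, and use that $\check{Y}$ is Bernoulli so $P(\check{Y}=1|l,k)=e^{(k)}_l$. The only difference is that you make the auxiliary identity explicit and sketch its derivation, whereas the paper uses it directly.
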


\begin{proof} 
Regarding the definition of $\check{\gamma}_1$ and the second constraint in $\mathcal{P}_{3c}$, we conclude that $\check{\gamma}_1 \geq p^*_l p_g^{(k)} \frac{1}{C}\sum_{y=0}^C yP(Y=y|l,K)= p^*_l p_g^{(k)}e^{(k)}_l$ and $\check{\gamma}_0 \geq p^*_l p_g^{(k)} (1-e^{(k)}_l)$. These two inequalities are in fact the third and second constraints in $\mathcal{P}_{31}$ since on the one hand, under any policy belonging to $\check{\mathcal{F}}$, we have $E[\check{Y}^{(k)}_l]=P(\check{Y}=1|l,k)$, leading to $e^{(k)}_l=P(\check{Y}=1|l,k)$.  The first constraint is also satisfied since $\check{\gamma}_0+\check{\gamma}_1=\sum_{y=0}^C \gamma_y$. Finally, the communication cost under both policies $o^{(k)}(\mathbf{z})$ and $\check{o}^{(k)}(\check{\mathbf{z}})$ are the same and equal to $\sum_k \sum_l p^{(k)}_g a_l e_l^{(k)}$. 
 
\end{proof}
\begin{lemma}
Suppose that there exists a constant value $h$ such that for any $l\in \{1,\cdots,L\}$, $1 \leq h \leq |S_l|$ and $hL \geq MC$, then there exists a policy $\check{o}^{(k)}(\mathbf{z})$ over $\check{\mathcal{F}}$ under which $E[\check{Y}^{(k)}_l]=e^{(k)}_l$.
\end{lemma}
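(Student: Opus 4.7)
The plan is to reduce the construction of $\check{o}^{(k)}(\check{\mathbf{z}})$ to a marginal-matching problem over the discrete set $\check{\mathcal{F}}$ and then invoke a short integer-polytope argument. First, in the whole-file case $C=1$, given that $\check{z}_l$ files are cached from subset $S_l$ and a request uniformly samples one of the $|S_l|$ files of $S_l$, the miss probability is $1-\check{z}_l/|S_l|$. Averaging over $\check{o}^{(k)}$ gives $E[\check{Y}^{(k)}_l] = 1 - E[\check{z}_l]/|S_l|$. On the other hand, the target $e^{(k)}_l = E[Y^{(k)}_l]$ produced by $o^{(k)}(\hat{\mathbf{z}})$ equals $1 - E[\hat{Z}^{(k)}_l]/(|S_l|C)$ by the expression written below \eqref{eq:cc-SPC}, so requiring $E[\check{Y}^{(k)}_l] = e^{(k)}_l$ is equivalent to the marginal condition
\[
E[\check{z}_l] \;=\; m_l \;:=\; \frac{E[\hat{Z}^{(k)}_l]}{C}, \qquad l=1,\dots,L,
\]
where the expectation on the left is under the sought distribution $\check{o}^{(k)}$.

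Next, I would verify that the vector $(m_1,\dots,m_L)$ lies in the polytope
\[
P \;=\; \Big\{z \in \mathbb{R}^L : 0 \le z_l \le |S_l|,\; \sum_{l=1}^L z_l = M\Big\}.
\]
The bounds $0 \le m_l \le |S_l|$ follow from $0 \le \hat{Z}^{(k)}_l \le |S_l|C$, and the equality $\sum_l m_l = M$ follows from $\sum_l E[\hat{Z}^{(k)}_l] = MC$, which holds because every placement in $\hat{\mathcal{F}}$ occupies exactly $MC$ chunks. The hypotheses $h \le |S_l|$ for all $l$ and $hL \ge MC$ are used at this point to guarantee that $\check{\mathcal{F}}$ is nonempty, since one can distribute the $M$ cached files among the subsets with at most $h$ per subset.

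The main step is to show that $P = \mathrm{conv}(\check{\mathcal{F}})$, so that any point of $P$ is realizable as the mean of a probability distribution supported on $\check{\mathcal{F}}$. At a vertex of $P$, exactly $L$ linearly independent constraints are tight; one of them must be the equality $\sum_l z_l = M$, while the remaining $L-1$ are box constraints, fixing $L-1$ coordinates at the integer values $0$ or $|S_l|$. The single free coordinate is then $M$ minus an integer sum and, by feasibility of the vertex, lies inside its box, so it is itself a nonnegative integer bounded by $|S_l|$. Hence every vertex of $P$ lies in $\check{\mathcal{F}}$, and by Carathéodory's theorem any $(m_1,\dots,m_L)\in P$ can be written as a convex combination of at most $L$ vertices of $P$. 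Taking the corresponding coefficients as $\check{o}^{(k)}(\check{\mathbf{z}})$ yields $E[\check{z}_l] = m_l$ for every $l$, and therefore $E[\check{Y}^{(k)}_l] = e^{(k)}_l$, which is the required conclusion.

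The step that needs the most care is the integer-vertex characterization of $P$: although intuitively clear, one must verify that the residual free coordinate at each vertex really satisfies its box constraints, which is immediate once one insists that a vertex be a feasible point of $P$. Every other piece of the argument is mechanical convex geometry, and the role of the hypothesis on $h$ is confined to ensuring that $\check{\mathcal{F}}$, and hence $P$, is nonempty so that the convex-combination representation above is meaningful.
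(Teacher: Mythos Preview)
Your argument is correct and takes a genuinely different route from the paper. The paper builds $\check{o}^{(k)}$ explicitly by restricting to placements with entries in $\{0,h\}$: it sets the marginal $q^{(k)}_{l,h}=|S_l|(1-e^{(k)}_l)/h$, verifies $\sum_l q^{(k)}_{l,h}=M/h$, and then invokes the DPC placement strategy of \cite{blaszczyszyn2015optimal} (treated as $L$ ``files'' and cache size $M/h$) to realize these marginals as a distribution on $\check{\mathcal{F}}$. The hypothesis on $h$ is load-bearing there, since $h\le |S_l|$ is what makes the $\{0,h\}$-valued placements feasible and the DPC trick needs the marginals to sum to the cache size. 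Your polytope argument sidesteps all of this: once the target mean $(m_1,\dots,m_L)$ is shown to lie in $P$ and every vertex of $P$ is seen to be integral, Carath\'eodory immediately yields the required distribution. This is both cleaner and strictly more general---your proof in fact goes through under the standing assumption $N\ge M$ alone, so your remark that the hypothesis on $h$ is needed ``to guarantee that $\check{\mathcal{F}}$ is nonempty'' undersells your own argument: nonemptiness of $P$ (equivalently of $\check{\mathcal{F}}$) already follows from $\sum_l |S_l|=N\ge M$, and $h$ plays no role anywhere in your construction. What the paper's route buys in exchange is an explicit, structured support (only $\{0,h\}$-valued placements) rather than an abstract convex combination of vertices.
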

\begin{proof}
We introduce the policy $\check{o}^{(k)}(\mathbf{z})$ as follows. First of all, we assume that $\check{o}^{(k)}(\check{\mathbf{z}})=0$ for any $\check{\mathbf{z}}$ that $\exists \check{z}_i\not \in \{0,h\}$, implying that either zero or $h$ files are cached from any subset $l$. Also, let  $q^{(k)}_{l,h}$ and $q^{(k)}_{l,0}$ denote the probabilities of caching $h$ and zero files from subset $S_l$ at cache $k$, under policy $\check{o}^{(k)}(\mathbf{z})$. Then, we have $E[\check{Y}^{(k)}_l]=q^{(k)}_{l,0}+q^{(k)}_{l,h}(1-\frac{h}{|S_l|})$, i.e., with probabilities of $q^{(k)}_{l,0}$ and $q^{(k)}_{l,h}(1-\frac{h}{|S_l|})$ one file is transferred in response to any file requested from $S_l$ by cache $k$. Note that $1-\frac{h}{|S_l|}$ is the probability that the requested file is not among $h$ selected files. Then from $E[\check{Y}^{(k)}_l]=e^{(k)}_l$ and $q^{(k)}_{l,h}+q^{(k)}_{0,h}=1$, $q^{(k)}_{l,h}$ is derived as $\frac{|S_l|(1-e^{(k)}_l)}{h}$. It can be seen that $\sum_{l=1}^L h q^{(k)}_{l,h}=\sum_{l=1}^L |S_l|(1-e^{(k)}_l)$. Since $|S_l|(1-e^{(k)}_l)$ is the average number of files cached from $S_l$ under policy $O^{(k)}(\hat{\mathbf{z}})$, we have $\sum_{l=1}^L |S_l|(1-e^{(k)}_l)=M$. Consequently, $\sum_{l=1}^L  q^{(k)}_{l,h}=\frac{M}{h}$. Now, we use the caching placement strategy of DPC, assuming that we have $L$ files, a cache of size $\frac{M}{h}$ and caching probabilities $q^{(k)}_{l,h}$. Regarding the equality $\sum_{l=1}^L  q^{(k)}_{l,h}=\frac{M}{h}$, a distribution over placements $\check{\mathbf{z}}\in \check{\mathcal{F}}$ is determined, where $\frac{M}{h}$ elements of each placement is equal to $h$ and other elements equal to zero. This distribution is assigned to $\check{o}^{(k)}(\mathbf{z})$.   
\end{proof}

\section{Proof of the privacy degree boundaries in SPC}
\label{appendix2}
The minimum privacy degree in the SPC is achieved when SPC tries to cache the most popular files as much as possible. Thus, the SPC chooses a placement in which all files of the subsets with lower index are chosen completely, as much as possible. Suppose that such a placement caches all files of the first $m-1$ subsets, i.e., $z_l=C|S_l|$ for $ 1\leq l \leq m-1$, and caches only $z_m < C |S_m|$ chunks from subset $m-1$. We calculate the term $\max_{l,k} p_g^{(k)} p^*_l \Pr(Y|l,k)$ in \eqref{eq:PD_SPC1}, for $y\in\{0,1,\cdots,C\}$, to calculate the minimum privacy degree. 

\emph{Case $y=0$}: $\Pr(Y=0|l,k)$ is calculated as 
\begin{equation}
\Pr(Y=0|l,k)=\begin{cases}
1;~~l\in\{1,2,\cdots,m-1\},\\
\frac{{|S_m|C-C \choose z_m-C}}{{|S_m| C \choose z_m}};~~l=m,\\
0;~~~\text{otherwise}.
\end{cases}
\end{equation}
Thus, we have
\begin{equation}
\begin{aligned}
\max_{l,k}~~ &p_g^{(k)} p^*_l \Pr(Y=0|l,k)= \max\big \{\max_{1 \leq k\leq m-1} p_g^{(k)} p^*_l,\\&p_g^{(k)} p^*_{m}\frac{{|S_m|C-C \choose z_m-C}}{{|S_m| C \choose z_m}},0 \big \}=p_g^{\max} p^*_1
\end{aligned}
\label{eq:y0}
\end{equation}

\emph{Case $y=C$}: $\Pr(Y=C|l,k)$ is calculated as
\begin{equation}
\Pr(Y=C|l,k)=\begin{cases}
0;~~l\in\{1,2,\cdots,m-1\},\\
\frac{{|S_m|C-C \choose z_m}}{{|S_m| C \choose z_m}};~~l=m,\\
1;~~~\text{otherwise}.
\end{cases}
\end{equation}
Thus, we have
\begin{equation}
\begin{aligned}
&\max_{l,k}~~ p_g^{(k)} p^*_l \Pr(Y=C|l,k)= \max\big \{0,p_g^{\max} p^*_{m}\frac{{|S_m|C-C \choose z_m}}{{|S_m| C \choose z_m}},\\ &\max_{m+1 \leq l \leq L} p_g^{(k)} p^*_l \big \}= \max \big\{p_g^{\max} p^*_{m}\frac{{|S_m|C-C \choose z_m}}{{|S_m| C \choose z_m}},p_g^{\max} p^*_{m+1} \big\}. 
\end{aligned}
\label{eq:yC}
\end{equation} 

\emph{Case $y\in\{1,\cdots,C-1\}$}: In this case, $\Pr(Y=y|l,k)=0$ for $l\neq m$ since any file requested from $m-1$ first subsets and from subsets $m+1$ to $L$ leads to $0$ and $C$ chunks transferred, respectively. Thus, for $y\neq 0,C$, we have $\max_{l,k}~~ p_g^{(k)} p^*_l \Pr(Y=y|l,k)= p_g^{\max} p^*_m \Pr(Y=y|m,k)$. Consequently, we have
\begin{equation}
\begin{aligned}
&\sum _{y=1}^{C-1} \max_{l,k}~~ p_g^{(k)} p^*_l \Pr(Y=y|l,k)\\&= p_g^{\max} p^*_m (1-\Pr(Y=0|m,k))-\Pr(Y=C|m,k))\\& = p_g^{\max} p^*_m(1-\frac{{|S_m|C-C \choose z_m}}{{|S_m| C \choose z_m}}-\frac{{|S_m|C-C \choose z_m-C}}{{|S_m| C \choose z_m}}).
\end{aligned}
\label{eq:ynot0C}
\end{equation}  
From \eqref{eq:y0}, \eqref{eq:yC}, \eqref{eq:ynot0C}, and \eqref{eq:PD_SPC1}, $\Psi_{\text{SPC}}^{\min}$ in \eqref{eq:pd_min_max_spc} can be concluded. Moreover, the maximum privacy degree in SPC is achieved when all possible placements are chosen with the same probability. In this case, the adversary chooses the most popular file as the requested file and $\argmax_k p_g^{(k)}$ as the requesting users, thus, leading to a maximum privacy degree equal to $1-p^*_1 p_g^{\max}$.

% use section* for acknowledgment
\ifCLASSOPTIONcompsoc
%  % The Computer Society usually uses the plural form
%  \section*{Acknowledgments}
%\else
%  % regular IEEE prefers the singular form
%  \section*{Acknowledgment}
%\fi
%
%
%The authors would like to thank...

% Can use something like this to put references on a page
% by themselves when using endfloat and the captionsoff option.
\ifCLASSOPTIONcaptionsoff
  \newpage
\fi

% trigger a \newpage just before the given reference
% number - used to balance the columns on the last page
% adjust value as needed - may need to be readjusted if
% the document is modified later
%\IEEEtriggeratref{8}
% The "triggered" command can be changed if desired:
%\IEEEtriggercmd{\enlargethispage{-5in}}

% references section

% can use a bibliography generated by BibTeX as a .bbl file
% BibTeX documentation can be easily obtained at:
% http://mirror.ctan.org/biblio/bibtex/contrib/doc/
% The IEEEtran BibTeX style support page is at:
% http://www.michaelshell.org/tex/ieeetran/bibtex/
%\bibliographystyle{IEEEtran}
% argument is your BibTeX string definitions and bibliography database(s)
%\bibliography{IEEEabrv,../bib/paper}
%
% <OR> manually copy in the resultant .bbl file
% set second argument of \begin to the number of references
% (used to reserve space for the reference number labels box)
\bibliographystyle{IEEEtran}
\bibliography{privacy_jrnl_Spt}
\end{document}